\documentclass[conference]{IEEEtran}
\IEEEoverridecommandlockouts
\usepackage{cite}
\usepackage{amsmath,amssymb,amsfonts}
\usepackage{algorithmic}
\usepackage{graphicx}
\usepackage{textcomp}
\usepackage{xcolor}
\usepackage{orcidlink}
\usepackage{subfigure}
\usepackage{multirow}
\usepackage{enumitem}
\usepackage{makecell}
\usepackage{booktabs}
\usepackage{amsthm}
\usepackage{caption}
\def\BibTeX{{\rm B\kern-.05em{\sc i\kern-.025em b}\kern-.08em
    T\kern-.1667em\lower.7ex\hbox{E}\kern-.125emX}}

\newtheorem{definition}{Definition}
\newtheorem{theorem}{Theorem}

\begin{document}

\title{From Single to Multiple Attributes: Experimental Insights on Sampling-Based Distinct Combination Estimation in GROUP-BY Queries}
\author{
    \IEEEauthorblockN{Yujie Zhang\orcidlink{0009-0007-3913-6377},Xiaochun Yang\orcidlink{0000-0002-6184-4771}$^*$, Bin Wang\orcidlink{0000-0002-2694-1023},Yuan Sui\orcidlink{0000-0001-7106-6312}}
    \IEEEauthorblockA{Northeastern University, P. R. China. $^*$Corresponding Author}
    \IEEEauthorblockA{2310759@stu.neu.edu.cn,\{yangxc,binwang\}@mail.neu.edu.cn, 2110689@stu.neu.edu.cn}
}

\maketitle

\begin{abstract}
Estimating the number of distinct combinations in multi-attribute GROUP-BY queries remains a significant yet underexplored challenge. Current cardinality estimation techniques primarily focus on SPJ queries (i.e., selections, projections, and joins) and neglect GROUP-BY operations; meanwhile, distinct value estimation research has mainly targeted the single-attribute setting. Although sampling-based methods, including recent approaches with learned models, can theoretically support multi-attribute estimation, their practical effectiveness remains unclear. A comprehensive empirical evaluation is thus lacking to address whether joint distribution information from samples alone is sufficient for accurate multi-attribute estimation, whether existing methods fully exploit single-attribute information and can be further optimized, and whether filtered GROUP-BY queries can be accurately estimated. To this end, we propose a specialized workload generator for multi-attribute GROUP-BY queries and generate both filtered and non-filtered queries over four real-world datasets. By evaluating existing methods across synthetic workloads and the multi-table TPC-H benchmark, we analyze the sources of GROUP-BY cardinality estimation errors and their impact on PostgreSQL's plan selection, offering key recommendations for future estimator design.
\end{abstract}

\begin{IEEEkeywords}
Cardinality Estimation, Distinct Value Estimation, Sampling
\end{IEEEkeywords}

\section{Introduction}
Estimating the number of distinct values (NDV) is a fundamental problem in many fields, and numerous techniques for single-attribute NDV estimation have been well-established. However, as data complexity grows, real-world queries often involve multiple attributes rather than just one. 
\textcolor{black}{A specific example is Query 16 in the standard industry benchmark TPC-H \cite{tpch}, which performs: \texttt{SELECT ... GROUP BY P\_BRAND, P\_TYPE, P\_SIZE}  to count reliable suppliers for distinct part specifications. This multi-attribute grouping is essential because, in supply chain management, a purchasable part is uniquely defined by the combination of its manufacturer (\texttt{P\_BRAND}), material (\texttt{P\_TYPE}), and dimension (\texttt{P\_SIZE}). This reliance on multi-attribute grouping is even more pronounced in the more advanced TPC-DS benchmark \cite{tpcds},} where $2/3$ of the queries include multi-attribute GROUP-BY operations involving $2$ to $15$ attributes, half of which appear in subqueries.
Accurately estimating the number of distinct multi-attribute combinations is crucial. 
Optimizers rely on precise estimates of distinct combinations to choose appropriate physical operators for aggregation \cite{lee2023analyzing, PostgreSQL, MySQL}, which directly affects the efficiency of the execution plan \cite{PGmulti,lee2023analyzing}. Moreover, when a GROUP-BY operation appears in a subquery, the estimated number of distinct combinations also influences the execution plan of the outer query \cite{freitag2019every}.


\textbf{\underline{Background:}}
Accurate estimation of multi-attribute distinct values is crucial, yet rarely studied explicitly.  First, existing cardinality estimation studies \cite{DeepDB2020,FLAT2021,Naru2019,VarSkip2020,NeuroCard2021,Pessimistic19,wu2022factorjoin,DBLP:journals/pvldb/LiWZDZ023,microsoft20,MSCN2019,DBLP:journals/pvldb/LuKKC21,DBLP:conf/edbt/MengWCZ022,DBLP:conf/sigmod/WuC21,DBLP:journals/pacmmod/Wang0YLMT23,DBLP:conf/sigmod/ZhaoYHLZ22,DBLP:journals/pvldb/LiuD0Z21,DBLP:conf/sigmod/Sun0021,DBLP:journals/corr/abs-2012-14743,DBLP:journals/pvldb/SunZSLT21} focus mainly on SPJ queries, which only consist of selection, projection, and join operations. The goal of these studies is to estimate the total number of tuples that satisfy the queries. To this end, they collapse multiple distinct values during processing \cite{DeepDB2020,FLAT2021,Halford19DASFAA,Naru2019,VarSkip2020,NeuroCard2021,Pessimistic19,wu2022factorjoin}(e.g., via histogram buckets), making them unsuitable for estimating distinct combinations and thus inapplicable to GROUP-BY queries.



Meanwhile, although there have been extensive studies on estimating the number of distinct values \cite{DBLP:journals/pvldb/Ertl24,flajolet2007hyperloglog,charikar2000towards,chao1984nonparametric,chao1992estimating,haas1998estimating,shlosser1981estimation,good1956number,wu2019chebyshev,wu2022learning,li2024learning,DBLP:conf/kdd/LiWDDLZ22,DBLP:journals/corr/abs-2502-16190,DBLP:conf/sigmod/Ting19}, they have mainly focused on single attributes.
Single-attribute NDV estimators are typically classified into sketch-based and sampling-based approaches \cite{Harmouch2017survey,DBLP:conf/edbt/MetwallyAA08}. Sketch-based methods \cite{DBLP:journals/pvldb/Ertl24,DBLP:journals/tkde/WangXZLLLRD24,DBLP:conf/stoc/PettieW21,heule2013hyperloglog,flajolet2007hyperloglog} summarize specified distributions into compact sketches. However, in the multi-attribute setting, they require pre-specified attribute combinations and therefore cannot flexibly adapt to arbitrary user queries and filter predicates. Sampling-based methods \cite{charikar2000towards,chao1984nonparametric,chao1992estimating,haas1998estimating,shlosser1981estimation,good1956number,wu2019chebyshev} derive estimators from frequency information observed in samples. Building on this, recent studies \cite{wu2022learning,li2024learning} further employ learned models as estimation functions. Theoretically, these approaches can support distinct value estimation over an arbitrary number of attributes and filter predicates, but their practical effectiveness in multi-attribute settings has not been systematically evaluated. In addition, Freitag et al.\cite{freitag2019every} leverage sketch-based single-attribute estimates to correct the joint distribution obtained from samples. However, this approach has not been compared against the latest learning-based methods. Consequently, existing research leaves several questions open:

\noindent \textbf{Q1.} Is it sufficient to use only joint distribution information from samples, together with more expressive models (e.g., learned models), to accurately estimate multi-attribute distinct values? 

\noindent \textbf{Q2.} Is there further room for optimizing how existing methods exploit single-attribute information?

\noindent \textbf{Q3.} Are current techniques sufficient to handle more complex scenarios involving filter predicates \textcolor{black}{and joins}?

\textbf{\underline{Contributions and Findings:}}
To address these questions, this paper makes the following contributions:


\textbf{(1) We introduce a general workload generator for evaluating multi-attribute distinct combination estimation.} The generator construct query workloads on four real-world datasets, covering $1$–$68$ GROUP-BY attributes and filtered/unfiltered scenarios, with distinct counts from $1$ to $10^7$.

\textbf{(2) We provide a systematic empirical and theoretical evaluation of traditional and learning-based estimators for multi-attribute distinct estimation.}
Experiments reveal that even advanced learning-based methods relying solely on joint distribution information leads to large errors and biases in multi-attribute settings. Meanwhile, existing boundary-based approaches provide only limited tightening of bounds, leaving room for further optimization.

\textbf{(3) We extend the study to filtered \textcolor{black}{and multi-table TPC-H} GROUP-BY queries, \textcolor{black}{while also analyzing the impact of GROUP-BY cardinality estimation on query planning.}} We show that filter predicates further amplify estimation errors due to reduced effective samples and selectivity bias. \textcolor{black}{We further the practical impact of these errors in PostgreSQL, demonstrating the importance of robust NDV estimation for aggregation strategy selection.}

\textbf{(4) We identify key limitations of current approaches and directions for future optimization.} We identify two future research directions: (i) improving estimation accuracy and (ii) quantifying and reducing uncertainty. We release our code on GitHub \textcolor{black}{\cite{ourcode}} to encourage further studies.

\textbf{\underline{Organization:}}
The rest of the paper is organized as follows: 
Section \ref{sec:2} defines the problem and provides a taxonomy of existing estimators, followed by the experimental setup in Section \ref{sec:Experimental Setup}. Sections \ref{sec:4}--\ref{sec:6} comprehensively evaluate existing estimators for unfiltered and filtered multi-attribute GROUP-BY queries,\textcolor{black}{while Section \ref{sec: tpch} extends this analysis to TPC-H workloads and query plan impacts.}
Future research opportunities are discussed in Section \ref{sec:7}. Finally, we present our conclusions in Section \ref{sec:9}.

\section{From Single to Multiple Attributes: Concepts and Estimators}\label{sec:2}
In this section, we first formulate the distinct combination estimation problem, then organize the sampling-based methods into a taxonomy and analyze how they support multi-attribute distinct value estimation, and finally, we discuss the limitations of existing experimental evaluations.
\vspace{-0.05in}
\subsection{Problem Definition}\label{sec:Problem Definition}

Most existing studies on NDV estimation \cite{DBLP:journals/pvldb/Ertl24,flajolet2007hyperloglog,charikar2000towards,chao1984nonparametric,chao1992estimating,haas1998estimating,shlosser1981estimation,good1956number,wu2019chebyshev,wu2022learning,li2024learning} focus on single attributes, whereas real-world queries often involve multiple attributes and optional selection predicates. Formally:
\begin{definition}[Multi-Attribute Distinct Combination Estimation]\label{def: distinct comb}
Given a relation $\mathcal{R}$, attributes $A_1,\dots,A_k$, and an optional selection predicate $\theta_\mathcal{R}$, let $\sigma_{\theta_\mathcal{R}}(\mathcal{R})$ denote the filtered relation (or $\mathcal{R}$ itself if no predicate). The goal is to estimate the number of unique value combinations:
i.e. $D(A)=\left|\{(v_1,v_2,\dots,v_k) \mid \exists r \in \sigma_{\theta_\mathcal{R}}(\mathcal{R}),r.A_1 =v_1,\dots,r.A_k =v_k \}\right|$.
\end{definition}


Due to correlations between attributes, estimating the cardinality by multiplying the distinct counts of attributes from $A_1$ to $A_k$ may lead to errors of several orders of magnitude \cite{PGmulti2}, which in turn result in suboptimal query plans \cite{PGmulti}. 
The presence of selection predicates further complicates the problem \cite{kipf2019estimating}, as selectivity and attribute correlations can significantly distort value distributions.
\vspace{-0.05in}
\subsection{Taxonomy of Sampling-based Estimators} \label{sec:sampling-based estimators}

In this section, we summarize sampling-based estimators for distinct value estimation.
\textcolor{black}{These methods draw a sample of size $n$, summarize the frequency information in the sample as a \emph{sample profile}, and estimate the number of distinct values $D$ as a function of the sample profile.
For multi-attribute GROUP-BY queries, the notion of \emph{sample profile} naturally extends by treating each distinct attribute-value combination as a single element. We formalize this abstraction as follows.}
\begin{definition}
[\textcolor{black}{Sample Profile for Multi-attribute GROUP-BY}] \label{def:fi}
\textcolor{black}{Consider a GROUP-BY query on attributes \((A_1, \ldots, A_k)\).
Each distinct attribute-value combination \((a_1, \ldots, a_k)\) is treated as one element.
Given a sample \(S\) of tuples from \(R\), let \(f_i\) denote the number of distinct combinations that appear exactly \(i\) times in \(S\).
The vector $\{f_i \mid i \in [1, m] \}$ is called the sample profile. 
An inline example is given below.
Consider a relation \(R(A,B) = \{(1,x), (1,x), (2,y), (2,y), (3,z), (4,z)\}\).
The frequencies of the distinct pairs are
\(\{(1,x):2,\ (2,y):2,\ (3,z):1,\ (4,z):1\}\),
which yields the population frequency of frequency \(\{F_1=2,\ F_2=2\}\).
If a sample \(S = \{(1,x), (1,x), (3,z)\}\) is drawn from \(R\), then the resulting sample profile is \(\{f_1=1,\ f_2=1\}\).}
\end{definition}

Overall, sampling-based estimators aim to infer the number of unseen elements from the observed sample profile. \textcolor{black}{We categorize these estimators into four classes based on how the estimation function is constructed and the statistics they exploit. Table~\ref{tab:ndv_estimators} summarizes these categories and highlights their key properties and evaluation coverage in prior studies.}

\begin{table}[ht]\scriptsize
\centering
\caption{\textcolor{black}{Taxonomy and Comparison of Sampling-based Estimators.}}
\label{tab:ndv_estimators}
\begin{tabular}{l @{\hspace{0.5em}} l @{\hspace{0.5em}} c @{\hspace{0.5em}} c@{\hspace{0.5em}} c @{\hspace{0.5em}}c @{\hspace{0.5em}}c}
\toprule
\textbf{Category}                               &\textbf{Method} & \textbf{Core Dep.}  & \textbf{Linearity} & \textbf{\makecell[l]{Multi Attr. \\Eval.}}  &\textbf{\makecell[l]{Filters.\\ Eval.}} &\textbf{\makecell[l]{Multi Tbl.\\ Eval.}}\\ \midrule
\multirow{2}{*}{\makecell[l]{Singleton-\\aware}}&GEE~\cite{charikar2000towards}      & Singletons          & L & $\times$                   & $\times$& $\times$\\
                                                &Chao~\cite{chao1984nonparametric}            & Singletons          & NL & $\times$& $\times$& $\times$\\\midrule
\multirow{3}{*}{\makecell[l]{Profile-\\aware}}  &Shlosser~\cite{shlosser1981estimation}      & Profile             & NL & $\times$  & $\times$& $\times$\\
                                                &GT~\cite{good1956number}              & Profile             & L& $\times$ & $\times$& $\times$\\
                                                &WY~\cite{wu2019chebyshev}             & Profile             & L  & $\times$& $\times$& $\times$\\\midrule
\multirow{2}{*}{\makecell[l]{Boundary-\\based}} &BC~\cite{freitag2019every}              & Singletons          & NL                 & $\times$                   & $\times$& $\times$\\
                                                &SCBC-T~\cite{freitag2019every}          & \makecell[c]{Singletons+\\Attr. Info}          & NL                 & $\circ$                    & $\times$& $\times$\\\midrule
\multirow{2}{*}{\makecell[l]{Learning-\\based}} &WD~\cite{wu2022learning}              & Profile             & NL                 & $\times$                   & $\times$& $\times$\\
                                                &PolyNet~\cite{li2024learning}         & Profile             & L                 & $\times$                   & $\times$& $\times$ \\ \bottomrule
\addlinespace
\multicolumn{7}{l}{\scriptsize \makecell[l]{\textbf{Legend:} L/NL: Linear/Non-Linear; $\times$: Not evaluated; $\circ$: Evaluated for multi-attri-\\bute queries, but not compared against learning-based models.}} \\
\end{tabular}
\vspace{-0.1in}
\end{table}
\noindent\underline{\textbf{Singleton-aware Estimators:}} Singleton-aware estimators model the unseen portion of the population mainly through low-frequency observations such as $f_1$ and $f_2$.
They assume that high-frequency elements in the population are also likely to appear in the sample and can therefore be estimated as $ \sum^m_{i=2}f_i$. The details of these estimators are as follows:
\begin{enumerate}[leftmargin=12pt,label=\arabic*)]
    \item GEE \cite{charikar2000towards}. 
    GEE estimates the number of singletons as the geometric mean $\sqrt{N/n}\cdot f_1 $ between the lower bound $f_1$ and upper bound $N\cdot f_1/n$.
    \begin{equation} 
    \begin{aligned} \small
    D^{GEE}=\sqrt{N/n}\cdot f_1 + \sum^m_{i=2}f_i=(\sqrt{N/n}-1)f_1+d
    \label{EQ:GEE}
    \end{aligned}
    \end{equation}
    \item Chao \cite{chao1984nonparametric}. Chao \cite{chao1984nonparametric} assumes that the estimated values have a nonlinear relationship with $f_1$ and $f_2$ observed in the sample as $D^{Chao}=d+{f_1}^2/2f_2$. 
    When $f_2=0$, this expression diverges, so we use the bias-corrected version \cite{li2024learning} as Equation (\ref{EQ:Chao}).
    \begin{equation}
    \begin{aligned} \small
    D^{Chao}=\frac{f_1(f_1-1)}{2(f_2+1)}+d
    \label{EQ:Chao}
    \end{aligned}
    \end{equation}
\end{enumerate}

\underline{\textbf{Profile-aware Estimators:}} 
This class of methods leverages the entire sample profile to infer the number of unseen elements. Details are as follows:

\begin{enumerate}[leftmargin=12pt,label=\arabic*)]
    \setcounter{enumi}{2}
    \item Shlosser \cite{shlosser1981estimation}. Shlosser \cite{shlosser1981estimation} was derived under the assumption that the sample frequencies proportionally reflect those in the population, i.e., $E[f_i]/E[f_1] \approx F_i/F_1 $.  
    \begin{equation} \small
    \begin{aligned}
    D^{Shlosser}=\frac{f_1\sum^m_{i=1}(1-\frac{n}{N})^if_i}{\sum^m_{i=1}i\frac{n}{N}(1-\frac{n}{N})^{i-1}f_i}+d
    \label{EQ:Shlosser}
    \end{aligned}
    \end{equation}
     \item GT \cite{good1956number}. GT \cite{good1956number} uses a linear combination of $f_i$ to estimate distinct values as in Equation (\ref{EQ:GT}). 
    \begin{equation} \small
    \begin{aligned}
    D^{GT}=\sum_i(-1)^{i+1}w^if_i+d
    \label{EQ:GT}
    \end{aligned}
    \end{equation}
    \item WY \cite{wu2019chebyshev}. The WY estimator leverages the approximation-theoretic properties of Chebyshev polynomials to propose a linear estimator, as shown in Equation (\ref{EQ:WY}).
    \begin{equation} \small
    \begin{aligned}  
    D^{WY}=\sum_{i=1}^{L}g_L(i)f_i+ {\sum_{i>L}f_i} 
    \label{EQ:WY}
    \end{aligned}
    \end{equation}
    where $g_L$ is a polynomial of degree $L$, as in GT, it is also oscillating and results in coefficients with alternating signs.

\end{enumerate}
\underline{\textbf{Boundary-based Estimators:}} This class of method theoretically derives the bounds for singletons. Specifically,
\begin{enumerate}[leftmargin=12pt,label=\arabic*)]
\setcounter{enumi}{5}
\item BC \cite{freitag2019every}. The BC (Bound-Corrected) estimator \cite{freitag2019every} establishes a relationship between the expected number of distinct values $D$ in the full table, the observed distinct count $d$, and singleton count $f_1$ in the sample, yielding tighter singleton bounds $L_{BC}$ and $U_{BC}$ than those in GEE, as shown in Equation (\ref{EQ:BC}).
    \begin{equation}\small
    \begin{aligned}
    D^{BC} = \sqrt{ \hat{L}_{BC} \cdot \hat{U}_{BC} } +\sum^m_{i=2}f_i
    \label{EQ:BC}
    \end{aligned}
    \end{equation}
\item SCBC-T \cite{freitag2019every}. The SCBC estimator (Sketch-Corrected BC)~\cite{freitag2019every} replaces $L_{BC}$ and $U_{BC}$ with tighter bounds $L_{SCBC}$ and $U_{SCBC}$ that incorporate single-attribute information: $\hat{L}_{SCBC} = \max( \hat{L}_{BC}, \max(\hat{F_{1,j}}) _{j=1,\ldots, C} )$, $\hat{U}_{SCBC} = \min ( \hat{U}_{BC}, \prod_{j=1}^{C}\hat{D_j})$, where $\hat{F_{1,j}}$ and $\hat{D_j}$ are estimated using single-attribute sketches. In this paper, we replace these estimates with their true values $F_{1,j}$ and $D_j$  to eliminate sketch-induced errors, and refer to the resulting upper-bound accuracy as SCBC-T.
\end{enumerate}
\underline{\textbf{Learning-based Estimators:}}
\textcolor{black}{This class of methods remains sampling-based, as the estimators operate exclusively on the sample profile. The key distinction is that the estimation function is learned from data rather than analytically derived.}
\begin{enumerate}[leftmargin=12pt,label=\arabic*)] 
    \setcounter{enumi}{7}
    \item WD \cite{wu2022learning}.
    Wu et al. \cite{wu2022learning} model the NDV estimation problem as a maximum likelihood estimation (MLE) problem, aiming to find the NDV value that maximizes the probability of observing the sample profile. The MLE problem is further transformed into a supervised learning problem, where the sample profile is used as features and the NDV as labels to train a neural network, as shown in Equation (\ref{EQ:WD}).
    \begin{equation} \small
    \begin{aligned}
    D^{WD}=h(f_1,\dots,f_m,N)
    \label{EQ:WD}
    \end{aligned}
    \end{equation}
    \item PolyNet \cite{li2024learning}. Li et al. combined the ideas of WY and WD, using a simple two-layer linear activation network called PolyNet to estimate the parameters of the polynomial. 
\end{enumerate}

\vspace{-0.03in}
\subsection{Limitations of Existing Experiments} \label{sec:Limitations of Existing Experiments}

\textcolor{black}{Table~\ref{tab:ndv_estimators} provides a structured comparison of representative sampling-based estimators and summarizes the key aspects that have (or have not) been evaluated in prior work. Based on this comparison, we elaborate on the major limitations below.}

\textcolor{black}{\textbf{Evaluation Gap in Multi-Attribute Scenarios.} By treating each unique combination $(a_1, \ldots, a_k)$ as a single element, any estimator designed for single-attribute NDV can, in principle, be applied to multi-attribute GROUP-BY queries. However, their performance in multi-attribute settings remains largely unvalidated. 
As shown in Table~\ref{tab:ndv_estimators}, most methods in the table, as well as existing surveys~\cite{liu2020sampling,Harmouch2017survey,abedjan2015profiling,DBLP:journals/dase/LanBP21}, focus on single columns and do not evaluate multi-attribute scenarios.}

\textcolor{black}{\textbf{Isolated Development of Boundary-based and Learning-based Methods.} The SCBC framework~\cite{freitag2019every} adopts a hybrid design that leverages auxiliary single-attribute information to tighten bounds for multi-attribute distinct estimation. However, it has not been systematically compared with state-of-the-art learning-based approaches. As a result, it remains unclear whether the expressive capacity of neural models, such as WD and PolyNet, can outperform the theoretical guarantees of boundary-based estimators like SCBC-T.}

\textcolor{black}{\textbf{Lack of Evaluation under Selection Predicates and Multi-Table Queries.} Theoretically, sampling-based estimators are applicable to arbitrary predicates, as they estimate using only the filtered samples. For multi-table queries, they can also be applied in principle as long as join samples are available. However, such scenarios have largely been unexplored in prior studies, as shown in Table \ref{tab:ndv_estimators}.}



\begin{table*}[ht]\scriptsize
\centering
\caption{Description of workloads.}
\label{tbl:workload}
\begin{tabular}{c|c@{\hspace{1em}} c@{\hspace{1em}} c @{\hspace{1em}}c @{\hspace{1em}}c| c@{\hspace{1em}}  c@{\hspace{1em}}  c @{\hspace{1em}} c |c @{\hspace{1em}} c @{\hspace{1em}} c @{\hspace{1em}} c} 
\toprule
\multirow{2}{*}{} & \multicolumn{5}{c|}{Data} &\multicolumn{4}{c|}{Non-filter Workload} &\multicolumn{4}{c}{Filtered Workload}\\
&Rows&Col/Cat&Domain&\textcolor{black}{Skewness}&\textcolor{black}{Variance}&Queries&Cols &Single NDV&Multi NDV&Queries&Single NDV&Multi NDV&\textcolor{black}{Selectivity}\\ 
\midrule
Census  &2.5M  &68/68 &$10^{51}$&\textcolor{black}{398-2.4M}&\textcolor{black}{$10^9-10^{12}$}&3351&$1-68$&$2-16$         &$2-10^6$ &$10050$&$1-16$&$2-10^6$&\textcolor{black}{$10^{-5}-1$}\\ 
Airline &10.0M &10/3  &$10^{22}$&\textcolor{black}{1-1.9M}&\textcolor{black}{$10^7-10^{11}$}&1023&$1-10$&$7-1786$       &$84-10^7$&$5035$ &$1-1769$&$1-10^7$&\textcolor{black}{$10^{-7}-1$}\\ 
DMV     &12.2M &20/12 &$10^{51}$&\textcolor{black}{1-12.1M}&\textcolor{black}{$10^6-10^{13}$}&1319&$1-19$&$2-10^4$&$4-10^7$ &$3954$ &$1-10^4$&$1-10^7$&\textcolor{black}{$10^{-8}-1$}\\ 
Campaign&4.3M  &21/17 &$10^{73}$&\textcolor{black}{1-4.3M}&\textcolor{black}{$10^{-2}-10^{12}$}&1319&$1-19$&$3-10^6$&$18-10^6$&$3850$ &$1-10^5$&$1-10^6$&\textcolor{black}{$10^{-7}-1$}\\ 
\bottomrule
\addlinespace
\multicolumn{14}{l}{\scriptsize \makecell[l]{\textbf{Legend:}  Cols/Cat: the total number of columns and categorical columns; Domain: the product of the distinct value counts across all columns; Skewness: the min and max fre-\\quencies for the attribute with the largest frequency gap; Variance: the range of frequency variances across attributes; Single/Multi NDV : the query cardinality ranges for single\\attribute/multi-attribute GROUP-BY queries, respectively.}} \\
\end{tabular}
\vspace{-0.1in}
\end{table*}
\section{Experimental Setup}\label{sec:Experimental Setup}
\vspace{-0.05in}
This section provides a detailed description of the general setup used in all our experiments.

\underline{\textbf{Environment.}} All experiments are performed on a Linux machine with 12th Gen Intel(R) Core(TM) i7-12700F CPU, 128 GB DDR4 main memory and 2 NVIDIA Tesla P100 GPU.

\underline{\textbf{\textcolor{black}{Implementation and Parameters.}}} \textcolor{black}{To ensure a fair comparison, all methods are evaluated using the open‑source Python implementations provided in prior works~\cite{li2024learning,wu2022learning}. For BC and SCBC \cite{freitag2019every}, which do not have public repositories, we faithfully re-implemented them in Python based on their original algorithmic descriptions. The hyperparameters for learning-based models (WD and PolyNet) and the configuration for statistical methods are kept strictly consistent with those reported in the original research.}

\underline{\textbf{\textcolor{black}{Workload and Dataset.}}}
\textcolor{black}{As shown in Table \ref{tab:ndv_estimators}, original evaluations typically focused on estimating the NDV of isolated columns within a dataset. In contrast, we developed a Workload Generator to simulate real-world SQL scenarios}:
\begin{enumerate}[leftmargin=12pt,label=\arabic*)]
\item \textcolor{black}{\textbf{Multi-attribute GROUP-BY Queries:} Instead of predefined columns, our generator enumerates and samples attribute combinations (from 2-column to full-row keys), excluding primary keys to ensure the queries are meaningful.} We generated workloads on four real-world datasets, Census \cite{census}, Airline \cite{airline}, DMV \cite{dmv} and Campaign \cite{campaign}. They are commonly used in related research \cite{wu2022learning,li2024learning,freitag2019every,DBLP:journals/pvldb/WangQWWZ21}, and vary in column count and domain size, with skewed data distributions. Details are provided in Table \ref{tbl:workload}.
\item \textcolor{black}{\textbf{Filtered GROUP-BY Queries:} 
Based on the non-filter workload, we further introduce predicates on non-grouping attributes.} We focus on range predicates, as the effects of selectivity and correlation are independent of the specific predicate type. For each query, three filtered variants with different selectivity levels are generated. The maximum selectivity is close to but less than $1$. Details of the filter workload and selectivity ranges are provided in Table~\ref{tbl:workload}.
\item  \textcolor{black}{\textbf{Cross-table Evaluation (TPC-H):} Beyond the workloads generated by our workload generator, we also evaluate multi-table GROUP-BY queries on the industrial TPC-H benchmark, considering GROUP-BY over joined relations.} 
\end{enumerate}

\underline{\textbf{Evaluation Metric.}}
\textcolor{black}{
We evaluate the methods from two aspects: \textit {accuracy} and \textit {latency}, reflecting both result quality and practical efficiency for the GROUP-BY distinct computation.}
\begin{itemize}  [topsep=0pt,leftmargin=12pt]
    \item \textit {Accuracy.} We adopt two standard metrics to quantify estimation accuracy : $Q-error=\max(\frac{Estimated}{True},\frac{True}{Estimated})$ measures the multiplicative deviation and symmetrically penalizes over- and under-estimation. A value of $1$ indicates perfect estimation. Rel-error, defined as $\frac{Estimated}{True}$, which captures the directional bias of estimation. Values greater (less) than $1$ indicate overestimation (underestimation). These metrics complement each other: Q-error summarizes overall accuracy, while Rel-error highlights trends of over- or under-estimation.
    \item \textcolor{black}{\textit{Latency.} We report latency of distinct estimation by separating it into two components: (i) the sampling time, and (ii) the inference time for estimating the number of distinct combinations. }

\end{itemize}
\vspace{-0.08in}
\begin{table*}[ht]\scriptsize
\centering
\caption{Average Q-error of methods with different sampling rates on single-attribute queries.}
\label{tbl:1}
\vspace{-0.1in}
\begin{tabular}{c| c c c c |c c c c |c c c c |c c c c} 
\toprule
\multirow{2}{*}{Method} & \multicolumn{4}{c|}{Census} &\multicolumn{4}{c|}{Airline} &\multicolumn{4}{c|}{DMV}&\multicolumn{4}{c}{Campaign}\\
& 0.001&0.005 & 0.007 & 0.01 & 0.001& 0.005 &0.007 &0.01 & 0.001&0.005 & 0.007 & 0.01 & 0.001 & 0.005 & 0.007 & 0.01  \\ 
\midrule
GEE      & 1.470& 1.068 & 1.045 & 1.053               & 2.803& 1.392& 1.272  &1.231&                     \textbf{2.061}& 1.456& 1.453 & 1.397& 5.028&  3.088 & 2.593 &  2.375  \\
Chao     & \textbf{1.024}& \textbf{1.006} & 1.012 & \textbf{1.002}     & 1.430& 1.248& 1.188  &1.188&                    3.848& 2.293& 2.070 &  1.846& 8.098&  4.719& 3.944 &  4.481  \\
Shlosser & 1.067& 1.020 & 1.023  & 1.019               & 5.646& 1.170& \textbf{1.081 }&\textbf{1.069} & 29.34& 1.620& 1.496 &  1.412& 4.552& 4.428 & 3.148 &  2.125 \\
GT       & 1.027& \textbf{1.006} & \textbf{1.011}  & \textbf{1.002}               & 1.610& 1.342& 1.285  &1.256&                    6.013& 3.269& 2.959 &  2.626& 114.6& 26.11& 19.28 & 14.24 \\
WY       & 1.596& 1.307 & 1.335  & 1.161              & 2.145& 1.295& 1.657  &1.201&                     5.836& 3.928& 2.324 &  3.777& 14.49&  4.256& 3.318 &  3.315 \\
BC       & 1.497& 1.063 & 1.042  & 1.048               &2.739 & 1.354& 1.243  & 1.196 & 2.102& 1.437  & \textbf{1.427} & 1.368 & \textbf{2.147} & \textbf{1.920} & \textbf{1.611}  & \textbf{1.581}\\
SCBC-T     & 1.015& 1.004 & 1.009 & 1.007 &1.088& 1.065& 1.066& 1.072 & 1.398 &1.230& 1.196& 1.173 & 1.744 & 1.510 &1.508&1.475\\
WD       & 1.334& 1.257 & 1.232  & 1.198               & 1.416& 1.260& 1.231  &1.183& 2.508& 1.637 & 1.555 & 1.393&  3.357&  2.238& 2.021 &  1.815  \\
PolyNet  & 1.028& 1.120 & 1.308  & 1.051               & \textbf{1.193}& \textbf{1.057}& 1.217  &1.076&   3.993& \textbf{1.372}& 1.561 &  \textbf{1.333} & 4.252&  3.246& 2.433 & 1.797 \\
\bottomrule

\end{tabular}
\end{table*}

\begin{table*}[ht]\scriptsize
\centering
\caption{Average Q-error of methods with different sampling rates on multi-attribute queries.}
\label{tbl:n}
\vspace{-0.1in}
\begin{tabular}{c| c c c c |c c c c |c c c c |c c c c} 
\toprule
\multirow{2}{*}{Method} & \multicolumn{4}{c|}{Census} &\multicolumn{4}{c|}{Airline} &\multicolumn{4}{c|}{DMV}&\multicolumn{4}{c}{Campaign}\\
& 0.001&0.005 & 0.007 & 0.01 & 0.001&0.005 & 0.007 & 0.01 & 0.001&0.005&0.007 & 0.01 & 0.001&0.005 & 0.007 & 0.01\\ 
\midrule
GEE      & 14.50 & 6.963 & 5.963 &  5.068     &18.96& 8.346 & 7.021 &  5.844 &19.29& 8.636 & 7.300 & 6.108  & 24.60 &10.74 & 9.002 &  7.464\\
Chao     & 21.62 & 6.817 & 5.448 &  4.236     & 3.852&14.99 & 18.82 & 27.53 & 7.636& 3.946 & 3.512 & 3.509  & 4.842& 29.19 & 54.58&  108.7\\
Shlosser & \textbf{2.556}  & \textbf{1.543}& \textbf{1.439}&   \textbf{1.341}  & 6.884& 2.973 & 2.562 &  2.228 & 2.873& 1.657 & 1.534 & 1.438 & 1.723&  1.378 &1.318 & 1.268\\
GT       & 408.2&88.60 & 64.62 & 46.56       &557.5&111.7 & 79.97 &56.21  &569.6&115.3 & 82.92 &58.60  &723.7&146.4 & 104.9 & 73.68\\
WY       & 45.78 &10.70 & 8.060 &  5.967     &47.06&10.02 & 7.368 &  5.359  &49.51& 10.85 & 8.078 &5.877  &77.54&  16.21 & 11.770 & 8.449\\
BC       & 4.581& 2.969 & 2.679 & 2.414      & 5.248&1.638 & 1.332 & 1.257 & 1.789 & 1.447 & 1.390 & 1.340 &  1.437 & 4.918 & 4.329 & 1.218\\
SCBC-T     & 4.454& 2.930 & 2.653 & 2.396      & 5.031&1.543 & \textbf{1.257} & 1.332 & \textbf{1.747} & \textbf{1.429} & \textbf{1.371} & \textbf{1.321} & \textbf{1.224} & \textbf{1.196} & \textbf{1.183} & \textbf{1.170} \\
WD       & 6.338  & 3.561 & 3.253 &  2.891    &\textbf{1.493}& \textbf{1.309} & 1.273 &  \textbf{1.250}  & 3.004& 2.096 & 1.984 & 1.806  & 2.017&  1.527 & 1.457 &  1.383\\
PolyNet  & 18.45 & 4.172 & 3.129 &  2.348    &11.34& 5.167 & 3.879 &  2.949  &23.57& 5.190 & 3.824 & 2.833 & 7.322&  6.374 & 4.591 &  3.264\\
\bottomrule
\end{tabular}
\end{table*}
\section{Can Single-Attribute Methods Handle Multi-Attribute Scenarios?}\label{sec:4}
The sample profile described in Definition~\ref{def:fi} captures only joint information; is this sufficient for accurate estimation?
This section investigates this issue for multi-attribute GROUP-BY queries without filters: Section \ref{sec:4.1} compares the estimation accuracy of single-attribute and multi-attribute GROUP-BY queries under different sampling rates, Section~\ref{sec: theo} theoretically analyzes the potential benefit of incorporating single-attribute information, and Section~\ref{sec:4.3} analyzes the computational overhead.

\vspace{-0.08in}
\subsection{Are Sample Profiles Effective for Distinct Combination Estimation?}\label{sec:4.1}
In this section, we evaluate overall accuracy on the non-filter workload. Following prior work~\cite{wu2022learning,li2024learning}, we use sampling rates from $0.001$ to $0.01$, repeat each experiment five times, and report average Q-errors for single- and multi-attribute GROUP-BY queries in Tables~\ref{tbl:1} and~\ref{tbl:n}. Since SCBC-T already uses single-attribute NDV information, its single-attribute accuracy is excluded from comparison; thus, Table~\ref{tbl:1} highlights the best results among the remaining methods.
In the following, we present the general observations (denoted as \textbf{O}), followed by a detailed analysis.

\textbf{O1: Relying solely on the joint distribution information, even state-of-the-art learning-based models exhibit larger errors in multi-attribute scenarios compared to single-attribute cases.}
Almost all methods achieve near-optimal accuracy (Q-error close to $1$) for single-attribute queries, but for multi-attribute queries, sample-profile-only methods show much larger errors. For instance, on the Census dataset with a $0.001$ sampling rate, multi-attribute GROUP-BY errors are $2$–$400\times$ higher than single-attribute ones, and even at $0.01$ sampling rate, errors remain up to $46\times$ larger.
These errors stem from the exponential growth of attribute combinations, which challenges \textcolor{black}{even SOTA learning-based methods like WD and PolyNet} when relying solely on joint samples. Consequently, incorporating single-attribute information becomes essential. \textcolor{black}{The Attr. Info-aware method SCBC-T} shows clear advantages on datasets with higher NDV (e.g., DMV and Campaign).
However, this advantage vanishes on low-NDV datasets such as Census, since the bounds derived by SCBC-T using single-attribute information typically reduce to BC. Section~\ref{sec:5} provides a detailed analysis.



\begin{figure*}[ht]
\centering
\begin{minipage}[]{0.8\textwidth}
\includegraphics[width=\linewidth]{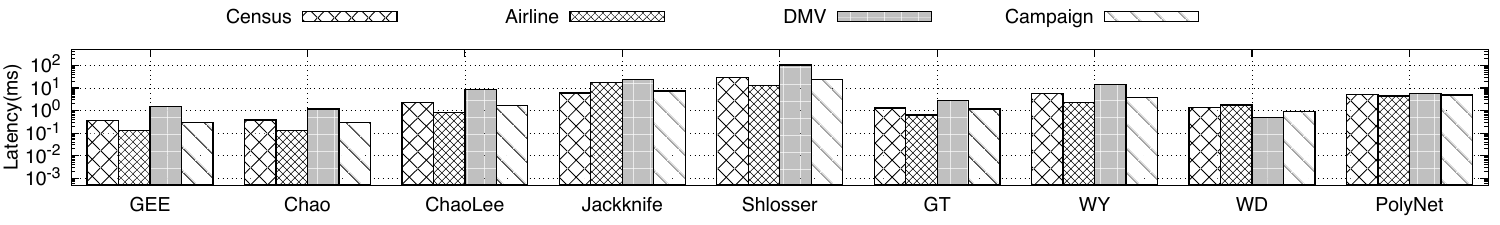}
\end{minipage}\\
\vspace{-0.05in}
\subfigure[\label{fig: samplingtime} \scriptsize \textcolor{black}{Sampling time}]{\includegraphics[width=0.18\textwidth]{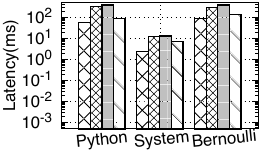}}
\subfigure[\label{fig: latency1} \scriptsize Inference latency on single-attribute queries]{\includegraphics[width=0.4\textwidth]{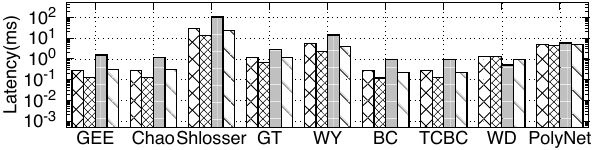}} 
\subfigure[\label{fig: latency2}\scriptsize Inference latency on multi-attribute queries]{\includegraphics[width=0.4\textwidth]{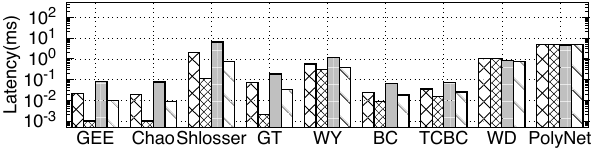}} 
\caption{Inference overhead on single- and multi-attribute queries.}
\label{fig: latency}
\vspace{-0.15in}
\end{figure*}
\vspace{-0.05in}
\subsection{Theoretical Analysis of Integrating Single-Attr. Information}\label{sec: theo}


\textcolor{black}{Prior negative results for sampling-based distinct estimation \cite{wu2022learning, charikar2000towards} establish that datasets sharing a common subset $T_0$ can remain statistically indistinguishable despite having a distinct count gap of $k$, implying unavoidable worst-case errors. In the absence of single-attribute information, this gap $k$ can be arbitrarily large as long as $T_0$ is captured by the sample. However, our empirical findings in Section \ref{sec:4.1} demonstrate that methods integrating single-attribute statistics, such as SCBC-T, exhibit significantly better stability. This improvement is theoretically grounded in the fact that observing marginal distributions bounds the total NDV by the Cartesian product of the marginal domains, effectively constraining the feasible range of $k$. To formalize this, the following theorem derives a tighter lower bound on estimation error when marginal distributions are known.}

\begin{theorem}
For a dataset of $N$ rows with two categorical attributes, let $\hat{D}$ be an estimator using a sample of size $n$ and known marginal distributions. For any $\gamma > e^{-4n}$, there exists a dataset such that with probability at least $\gamma$:
\begin{equation} \small
error(\hat{D}) \ge \left( \frac{N - n}{4n} \ln \frac{1}{\gamma} \right)^{\frac{1}{4}}
\end{equation}
\label{thm::theorem1}
\end{theorem}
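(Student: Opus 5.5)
We adapt the negative-result argument of Charikar et al.~\cite{charikar2000towards}, which \cite{wu2022learning} also invokes. That argument builds two datasets that agree on a large common part $T_0$ and differ by $k$ values that are each rare. Any estimator that sees only a sample confined to $T_0$ then cannot tell the two datasets apart. In that argument $k$ is unconstrained. Here the estimator also knows both marginal distributions, so the construction must satisfy two extra requirements: the marginals of both datasets must coincide, and the number of distinct combinations is capped by the Cartesian product of the marginal domains. We therefore pay a square root on $k$ inside the error bound, which turns the usual exponent $1/2$ into $1/4$.

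\textbf{Construction.} Fix $k$, to be chosen later, and let each attribute have a domain of about $\sqrt{k}$ values.
\begin{itemize}
\item Dataset $\mathcal{R}_1$ is the pure diagonal: every row is $(a,a)$ for a single value $a$. It has $D=1$ and trivial marginals.
\item Dataset $\mathcal{R}_2$ keeps $N-k$ rows equal to $(a,a)$ and replaces the remaining $k$ rows by an arrangement of pairs $(x_i,y_j)$ over a $\sqrt{k}\times\sqrt{k}$ grid. Each grid point appears exactly once, so $D\approx k+1$.
\end{itemize}
The marginals of $\mathcal{R}_2$ are not identical to those of $\mathcal{R}_1$, so we must either make them match or show they are consistent with both hypotheses. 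The cleaner option is to build both datasets on the same grid of marginal values:
\begin{itemize}
\item $\mathcal{R}_1$ places the $k$ special rows on $\sqrt{k}$ diagonal pairs $(x_i,y_i)$, each with multiplicity $\sqrt{k}$;
\item $\mathcal{R}_2$ spreads the same $k$ rows over all $k$ grid points.
\end{itemize}
Both datasets then have identical marginals: each $x_i$ and each $y_j$ occurs $\sqrt{k}$ times. They differ in their combination counts, $\approx\sqrt{k}$ versus $\approx k$, a ratio of $\sqrt{k}$.

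\textbf{Indistinguishability.} With sampling without replacement, the probability that no special row enters the sample is at least
\[
\Bigl(1-\tfrac{k}{N-n}\Bigr)^{n}\ \ge\ e^{-2kn/(N-n)}
\]
whenever $k/(N-n)\le 1/2$. On this event the sample and the marginals are the same under either dataset, so $\hat D$ returns the same value in both cases. That value must be off by a factor of at least $\sqrt{\sqrt{k}}=k^{1/4}$ for one of the two datasets, by the usual geometric-mean argument on the Q-error.

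\textbf{Choice of $k$.} Set $e^{-2kn/(N-n)}=\gamma$, which gives $k=\frac{N-n}{2n}\ln\frac1\gamma$. The error is then $k^{1/4}$, which after constant bookkeeping (the $1/4$ factor absorbs the slack in the probability bound and the integer rounding of $\sqrt{k}$) gives $\bigl(\frac{N-n}{4n}\ln\frac1\gamma\bigr)^{1/4}$. The hypothesis $\gamma>e^{-4n}$ ensures that $k<2(N-n)$. With adjusted constants, this is the regime $k/(N-n)\le 1/2$ where the exponential approximation holds and the special rows fit in the table.

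\textbf{Main obstacle.} The delicate step is making the marginals exactly equal while keeping the combination ratio $\sqrt{k}$. It also requires that the known marginals genuinely do not reveal which dataset is present. Since the estimator knows the marginal frequencies, we must also check that no estimator can exploit the diagonal-versus-grid difference via the unsampled part. On the no-special-row event this holds, because the only remaining information is the marginals, which we have made identical. The rest is matching constants to the stated $\frac14$ and the range $\gamma>e^{-4n}$.
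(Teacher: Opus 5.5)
Your argument is essentially the paper's proof. Its $T_1$ (diagonal plus off-diagonal Cartesian pairs) and $T_2$ (diagonal pairs only) are your grid and diagonal datasets with $d=\sqrt{k}$: they have identical marginals and distinct-combination counts in ratio $d$, and the geometric-mean step gives error $\sqrt{d}=k^{1/4}$. The one difference is that you bound the single-sample event of avoiding the rare rows, whereas the paper bounds a joint event over two samples, which is where its factor $4$ comes from. Your claim that $\gamma>e^{-4n}$ puts you in the regime $k/(N-n)\le 1/2$ does not follow, since it only gives $k<2(N-n)$. The paper's proof has the same limitation and simply imposes $d^2/(N-n)\le 1/2$ as an extra condition.
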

\vspace{-0.15in}
\begin{proof}
\vspace{-0.15in}
Following the strategy in \cite{charikar2000towards,wu2022learning}, we construct two datasets with identical marginal distributions over attributes $X$ and $Y$, but with vastly different numbers of distinct $(X, Y)$ pairs. With probability at least $\gamma$, the same sample may be drawn from both. Specifically, given the marginal distributions of $X$ and $Y$, the maximum distinct $(X, Y)$ pairs is the Cartesian product, minimized under perfect bijection. So we construct two datasets $T_1 = T_0 \cup \Delta_1$ (Cartesian case) and $T_2 = T_0 \cup \Delta_2$ (bijective case). Let $X$ and $Y$ each have $d$ distinct values; then $T_1$ has $d^2$ distinct $(X, Y)$ pairs, while $T_2$ has only $d$. The shared part $T_0$ contains bijective pairs $(X_i, Y_i)$ for $i = 1, \dots, d$. $\Delta_1$ includes the off-diagonal Cartesian pairs $(X_i, Y_j)$ for $i \ne j$, each appearing once. Hence, $\Delta_1$ contains $(d^2 - d)$ tuples. $\Delta_2$ is of the same size as $\Delta_1$, but contains only values that already have appeared in $T_0$. A concrete example illustrates this construction: 
\begin{equation}  \small
\begin{aligned}
&T_0 = \{(1,A),(2,B),(3,C)\}\\
&\Delta_1 = \{(1,B),(1,C),(2,A),(2,C),(3,A),(3,B)\}\\
&\Delta_2 = \{(1,A),(2,B),(3,C),(1,A),(2,B),(3,C)\}\\
\end{aligned}
\end{equation}

We now proceed to analyze the probability that a uniformly random sample $S_1 = \{a_1, \dots, a_n\}$, drawn from $T_1$, is entirely contained within $T_0$. Note that the size of $T_0$ is $N - |\Delta_1| = N - (d^2 - d)$, when drawing the $i$-th value $a_i$,
\begin{equation} \small
\begin{aligned}
P(a_i\in T_0 \mid a_1 \in T_0,...,a_{i-1}\in T_0)=\frac{N-(d^2-d)-i+1}{N-i+1} 
\end{aligned}
\end{equation}
thus, the probability that $S_1$ contain values only from $T_0$ is
\begin{equation} \small 
\begin{aligned}
P(S_1 \subset T_0) &= \prod_{i=1}^{n} \frac{N-(d^2-d)-i+1}{N-i+1} \ge \left(\frac{N-n-(d^2-d)}{N-n} \right)^n \\
&\ge \left(\frac{N-n-d^2}{N-n} \right)^n 
= \left(1-\frac{d^2}{N-n} \right)^n 
\label{eq: s1 in t0}
\end{aligned}
\end{equation}

Similarly, the probability that a sample drawn from $T_2$ is entirely contained within $T_0$ is $P(S_1 \subset  T_0) \ge \left(1-\frac{d^2}{N-n} \right)^n$. When $S_1 \subset T_0$ and $S_2 \subset T_0$, the sample profiles $f_1$(from $S_1$) and $f_2$(from $S_2$) follow identical distributions, formally, $P(f_1 = f \mid S_1 \subset  T_0 \wedge  S_2 \subset  T_0) = P(f_2 = f \mid S_1 \subset  T_0 \wedge  S_2 \subset  T_0)$. Thus, we cannot determine whether the underlying dataset is $T_1$ or $T_2$ based on the observed sample profile. From Equation (\ref{eq: s1 in t0}), we have 
\begin{equation} \small
\begin{aligned}
P(S_1 \subset  T_0 \wedge  S_2 \subset  T_0) \ge \left(1-\frac{d^2}{N-n} \right)^{2n}  \ge e^{\frac{-4d^2n}{N-n}}
\end{aligned}
\end{equation}
where the last inequality follows from the claim that $1 - z \ge e^{-2z}$ for $z = d^2/(N - n) \le 1/2$, which can be easily verified \cite{charikar2000towards,wu2022learning}. Let $\gamma=e^{\frac{-4d^2n}{N-n}}$ and thus $d=\sqrt{\frac{N-n}{4n}ln\frac{1}{\gamma }} $. 

Thus, for $\gamma \ge e^{-4n}$ (which ensures $|T_0| = N - (d^2-d)\ge N - d^2 \ge n$, i.e., a sufficient number of values can be drawn from $T_0$ into $S_1$ or $S_2$) and $z = d^2/(N-n) \le 1/2$, with probability at least $\gamma$, we cannot distinguish whether the sample profile $f$ is from $T_1$ or $T_2$. For estimation $\hat{D}$, the error is at least $\max(\frac{D_1}{\hat{D}},\frac{\hat{D}}{D_2})$, which is minimized when $\hat{D}=\sqrt{D_1D_2}=\sqrt{d^3}$. And thus
\begin{equation} \small
\begin{aligned}
error(\hat{D})\ge \frac{\sqrt{d^3}}{d}=\sqrt{d}=\left(\frac{N-n}{4n}ln\frac{1}{\gamma }\right)^{\frac{1}{4}}
\end{aligned}
\end{equation}
\end{proof}

\vspace{-0.12in}



\textcolor{black}{In summary, Theorem~\ref{thm::theorem1} demonstrates that the presence of marginal information tightens the lower bound of the worst-case error compared to the bound established in the absence of such information, i.e., $\sqrt{\frac{N-n}{2n}\ln\frac{1}{\gamma}}$~\cite{charikar2000towards}. This confirms theoretically that supplementing boundary information can prune the space of possible joint distributions, which aligns with our experimental observation: the boundary-based method SCBC‑T achieves relatively more stable error in multi-attribute scenarios. However, SCBC‑T frequently degenerates to BC, indicating that the bounds derived from the marginals are often too loose to be effective. This gap highlights substantial room for future research on how to more effectively integrate marginal statistics, beyond simple Cartesian-product or singleton bounds, into multi-attribute distinct estimation.}

\begin{figure*}[ht!]
\centering
\vspace{-0.05in}
\subfigure[\label{fig: Census-1}\scriptsize Census]{\includegraphics[width=0.245\textwidth]{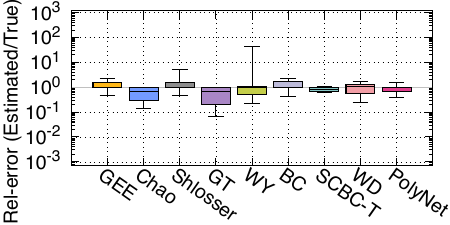}} 
\subfigure[\label{fig: Airline-1}\scriptsize Airline]{\includegraphics[width=0.245\textwidth]{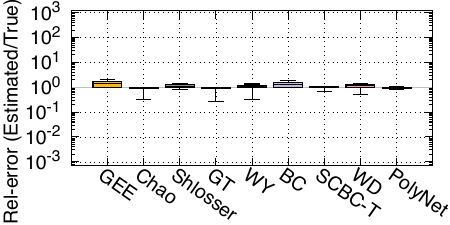}}
\subfigure[\label{fig: DMV-1}\scriptsize DMV]{\includegraphics[width=0.245\textwidth]{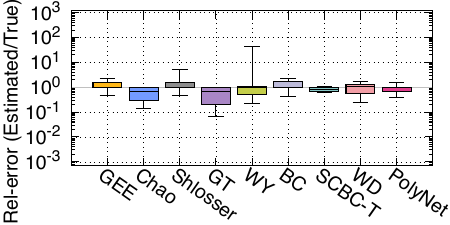}}
\subfigure[\label{fig: Campaign-1}\scriptsize Campaign]{\includegraphics[width=0.245\textwidth]{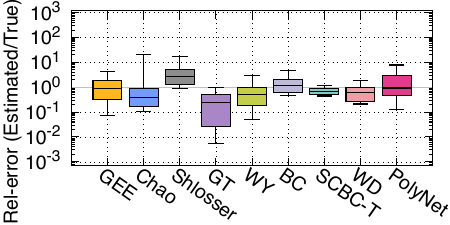}} 
\vspace{-0.15in}
\caption{\label{fig: single}Rel-error distribution on single-attribute GROUP-BY queries.}
\vspace{-0.05in}
\end{figure*}
\begin{figure*}[ht!]
\centering
\vspace{-0.05in}
\subfigure[\label{fig: Census-n}\scriptsize Census]{\includegraphics[width=0.245\textwidth]{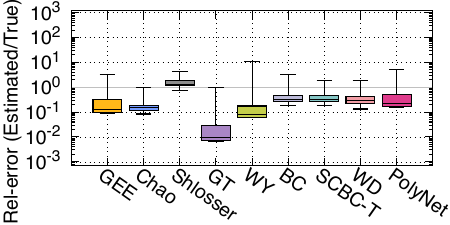}} 
\subfigure[\label{fig: Airline-n}\scriptsize Airline]{\includegraphics[width=0.245\textwidth]{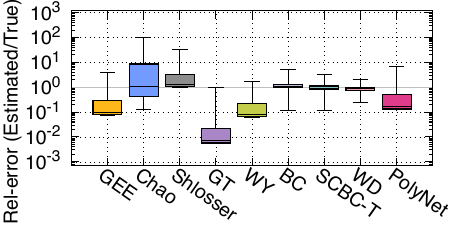}}
\subfigure[\label{fig: DMV-n}\scriptsize DMV]{\includegraphics[width=0.245\textwidth]{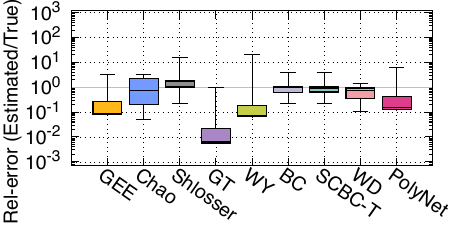}}
\subfigure[\label{fig: Campaign-n}\scriptsize Campaign]{\includegraphics[width=0.245\textwidth]{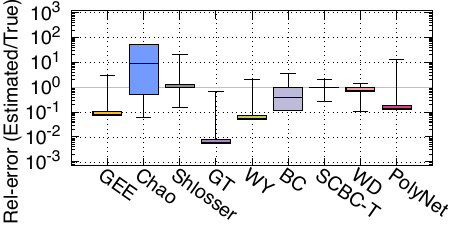}} 
\vspace{-0.15in}
\caption{\label{fig: multiple}Rel-error distribution on multi-attribute GROUP-BY queries.}
\vspace{-0.05in}
\end{figure*}

\subsection{What About Inference and Deployment Cost?} \label{sec:4.3}
In this section, we evaluate the computational overhead of existing methods for estimating single-attribute and multi-attribute GROUP-BY queries. 
\textcolor{black}{Figure \ref{fig: samplingtime} summarizes the average sampling overhead. To rigorously assess the data access overhead in a real-world system, we additionally measure sampling time in PostgreSQL using built-in sampling operators, including both \texttt{TABLESAMPLE SYSTEM} (page-level sampling) and \texttt{TABLESAMPLE BERNOULLI} (tuple-level sampling) overhead.}
Figures~\ref{fig: latency1} and \ref{fig: latency2} show the estimation time of each method given a sample. Since SCBC-T serves as a theoretical baseline that leverages single-attribute information, Figure \ref{fig: latency2} reports the estimation time given single-attribute information.
The key observations are summarized as follows:

\textbf{O2: For multi-attribute distinct combination estimation, all methods incur relatively low inference costs. For deployment, learning-based methods involve additional training overhead; but their integration with sampling provides strong generalization capability, enabling a single model to adapt to varying data distributions.}

\textcolor{black}{Overall, as illustrated in Figure \ref{fig: samplingtime}, sampling latency scales with dataset size across all implementations. 
The Python-based implementation ($58$–$391$ ms) and PostgreSQL’s \texttt{BERNOULLI} ($86$–$385$ ms) exhibit comparable overheads, as both require a full scan for tuple-level random sampling, with the latter also including additional data access costs.
For latency-critical applications, PostgreSQL’s \texttt{SYSTEM} significantly reduces I/O costs by sampling data pages rather than individual tuples, completing within a mere $2.4$ to $13$ ms. Ultimately, in practical query optimization, samples are typically pre-computed and cached as part of background statistics maintenance (e.g., the \texttt{ANALYZE} mechanism in PostgreSQL). Consequently, the sampling process does not impose any runtime overhead during the active query inference phase.} 

For inference cost, Figures~\ref{fig: latency1} and \ref{fig: latency2} show that non-learning-based methods infer multi-attribute queries faster than single-attribute ones, since their inference cost depends on the sample profile size. Multi-attribute queries yield highly sparse profiles (most values appear once), reducing profile size and inference time. In contrast, learning-based methods pad profiles to fixed-length vectors, resulting in stable inference time regardless of the number of attributes.

For deployment, learning-based methods incur additional training costs: WD and PolyNet require $7948$s and $421$s for training, respectively, and generating the shared 100K synthetic training samples takes over 6 hours. However, once trained, these models are reusable across datasets and distributions, unlike prior non-sampling-based learning-based SPJ estimators that require retraining per dataset \cite{DeepDB2020,FLAT2021,Naru2019,VarSkip2020,NeuroCard2021,microsoft20,MSCN2019}. This reuse highlights the generalization enabled by sampling and provides a stable cost profile for multi-attribute distinct estimation, allowing future work to focus on improving accuracy.

\section{Where Are the Opportunities for Improvement?}\label{sec:5}
In this section, we conduct an in-depth analysis of the sources of estimation errors and explore potential directions for optimization. Specifically, Section \ref{sec:5.1} provides a overall analysis of the error distributions of the various methods, while Section \ref{sec:5.2} conducts a query-level analysis to examine the sources of error.

\vspace{-0.05in}
\subsection{Overall Error Trends}\label{sec:5.1}
\vspace{-0.05in}
This section compares error distributions across all methods. \autoref{fig: single} and \autoref{fig: multiple} show the relative error distributions for single- and multi-attribute queries at a 0.5\% sampling rate. In each boxplot, the five vertical lines represent the minimum, 25th percentile, median, 75th percentile, and maximum errors. Results closer to $1(10^0)$ indicate higher accuracy, while values above or below $1$ signify overestimation and underestimation, respectively. Key observations are summarized below.


\textbf{O3: Methods based on heuristic distributions show systematic over- or under-estimation, even with learned parameterization.}
As shown in Figures \ref{fig: single} and \ref{fig: multiple}, when handling multiple attributes, the boundary-based methods (BC and SCBC-T) and the learning-based estimator WD exhibit a more stable error distribution. 
\textcolor{black}{In contrast, estimators that rely on strong distributional assumptions show systematic biases. Methods that assume a linear relationship between the number of distinct values and the observed frequencies (including GEE, GT, WY, and PolyNet) tend to underestimate the true distinct count, whereas nonlinear estimators (Chao and Shlosser) consistently overestimate.} Notably, while both WY and PolyNet are rooted in polynomial approximation, PolyNet’s use of a neural network to learn coefficients allows it to consistently outperform WY, demonstrating the benefit of incorporating learning-based parameterization. However, due to its underlying linear structure, PolyNet remains less accurate than the MLE-based learning model WD in multi-attribute settings.
These systematic under- and over-estimation patterns are primarily driven by the singleton count and are further analyzed in the query-level error analysis (Section~\ref{sec:5.2}).

\begin{figure*}[ht!]
\centering
\begin{minipage}[]{0.8\textwidth}
\includegraphics[width=\linewidth]{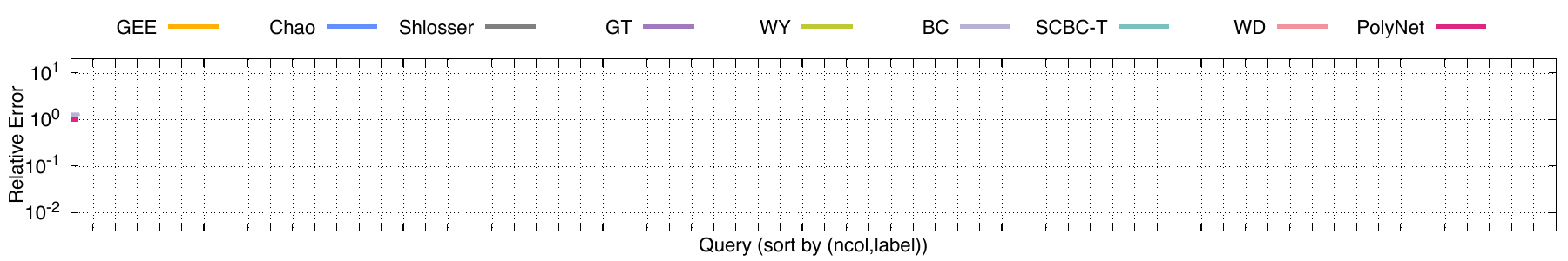}
\end{minipage}\\
\vspace{-0.05in}
\subfigure[\label{fig: Censuspercentage}\scriptsize Census]{\includegraphics[width=0.245\textwidth]{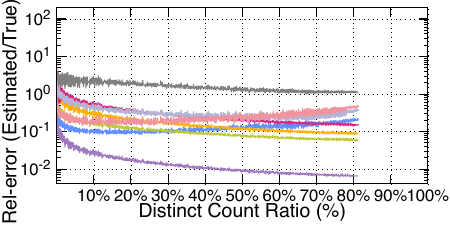}} 
\subfigure[\label{fig: Airlinepercentage}\scriptsize Airline]{\includegraphics[width=0.245\textwidth]{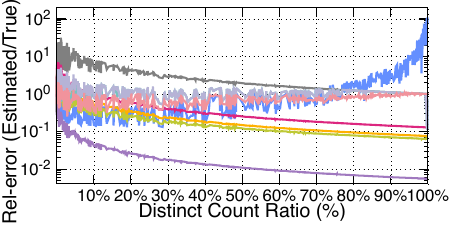}}
\subfigure[\label{fig: DMVpercentage}\scriptsize DMV]{\includegraphics[width=0.245\textwidth]{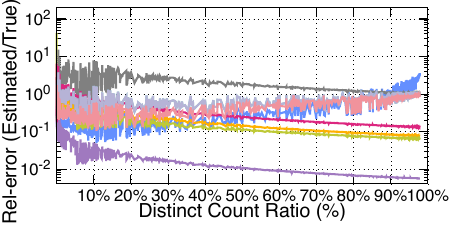}}
\subfigure[\label{fig: Campaignpercentage}\scriptsize Campaign]{\includegraphics[width=0.245\textwidth]{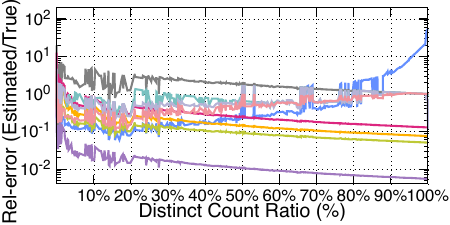}} 
\vspace{-0.15in}
\caption{\label{fig:errorpercentage}Rel-errors on varying distinct count ratio ($D/N$).}
\vspace{-0.05in}
\end{figure*}

\begin{figure*}[ht!]
\centering
\begin{minipage}[]{0.8\textwidth}
\includegraphics[width=\linewidth]{figures/key.pdf}
\end{minipage}\\
\vspace{-0.05in}
\subfigure[\label{fig: census(a)}\scriptsize Census (Queries sorted by $D$)]{\includegraphics[width=0.245\linewidth]{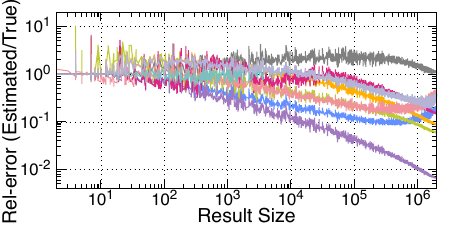}}
\subfigure[\label{fig: census(b)}\scriptsize Census (Queries sorted by number of attributes and distinct count)]{\includegraphics[width=0.745\linewidth]{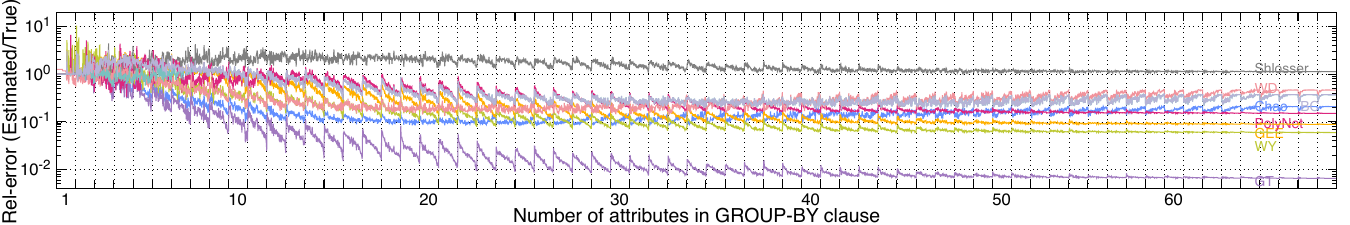}}
\vspace{-0.15in}
\caption{\label{fig:errorperquery}Rel-errors on varying distinct count and number of attributes.}
\vspace{-0.15in}
\end{figure*}

\subsection{Query-level Error Analysis}\label{sec:5.2}
In this section, we delve into the query level to analyze the sources of estimation errors

\underline{\textbf{The impact of the singleton elements.}} As discussed in Section \ref{sec:4.1}, an increase in the number of distinct combinations leads to more unseen elements in the sample, which in turn influences the estimation error. To uniformly evaluate the impact of the distinct count over different datasets, we compute the ratio of the distinct count ($D$) to the total number of records ($N$) in each dataset. We then sort all queries based on their $D/N$ ratio in ascending order. Specifically, on the Census dataset, due to the relatively low number of distinct values per column, the maximum $D/N$ ratio across all queries is 81.2\%, for the other three datasets are 99.9\%, 97.8\%, and 100\%, respectively. The corresponding rel-errors are shown in Figure \ref{fig:errorpercentage}, the observations are as follows:



\textbf{O4: The confidence and confidence interval of multi-attribute distinct value estimation are related to the number of singletons.}
As $D/N$ increases, the number of singleton elements $F_1$ also rises, and reaches $F_1 = N$ when $D = N$. As shown in Figure \ref{fig:errorpercentage}, when there are fewer singletons, all estimators are relatively more accurate, corresponding to narrower confidence intervals. As the number of singletons increases, \textcolor{black}{the analysis needs to be divided into two categories:}
\begin{itemize} [topsep=0pt,leftmargin=12pt]
    \item \textcolor{black}{Methods without boundary information} (GEE, GT, WY, and PolyNet) assume linear relationships between \(D\) and \(f_i\), leading to systematic underestimation as \(F_1\) increases, while the nonlinear Chao estimator tends to overestimate. These estimators’ confidence intervals become wider.
    \item \textcolor{black}{Methods incorporating boundary information (Shlosser, WD, BC, and SCBC-T)} remain more stable and accurate under high singleton regimes. This robustness stems from explicitly encoding population size or theoretically derived bounds into the estimation process.
\end{itemize}



In summary, the confidence intervals for sampling-based estimators are closely related to the number of singletons, and the error trends depend on whether boundary information is considered. This observation provides a foundation for further research into quantifying estimator uncertainty.

\textbf{O5: The bounds of SCBC-T are relatively loose. For the vast majority of queries, the corrected results using single-attribute information are identical to the uncorrected ones. More effective use of single-attribute information offers potential for further improving accuracy.}
Recall Table \ref{tbl:n} and Figure \ref{fig: multiple}, SCBC-T, which corrects boundaries using single-attribute information, achieves higher overall accuracy than BC. 
\textcolor{black}{However, as shown in Figure \ref{fig:errorpercentage}, the curve for SCBC-T is nearly indistinguishable from that of BC in most cases. This overlap occurs because SCBC-T frequently degenerates into the BC estimator when boundary-based constraints are not sufficiently tight. Specifically, across the four datasets, 92\%, 89\%, 91\%, and 36\% of queries, respectively.
This is due to the relatively loose bounds obtained by SCBC-T from single-attribute information.} 
Recalling Section~\ref{sec:sampling-based estimators}, SCBC-T represents the upper bound of SCBC's accuracy by isolating the errors introduced by sketches. It adjusts BC's bounds using true single-attribute information: the number of singleton elements $F_{1,j}$ and the number of distinct values $D_j$ for each attribute $j$. However, for the upper bound $\hat{U}_{SCBC-T} = \min ( \hat{U}_{BC}, \prod_{j=1}^{C}D_j)$. When $\prod_{j=1}^{C}D_j\ge N$, the bound computed from single-attribute distinct values becomes ineffective, and $\hat{U}_{SCBC-T} = \hat{U}_{BC}$. For the lower bound, $\hat{L}_{SCBC-T} = \max( \hat{L}_{BC}, \max(F_{1,j})_{j=1,\ldots, C} )$, the effect of tightening BC's lower bound is only significant when the GROUP-BY involves attributes with a very large number of singleton elements. Consequently, SCBC-T estimates often fall back to BC's estimates. In conclusion, these observations indicate that exploring more effective ways to leverage single-attribute information remains a promising research direction.

\underline{\textbf{The impact of the number of attributes.}}
To isolate the impact of attribute count from distinct count, Figure \ref{fig: census(a)} sorts queries by distinct count, while Figure \ref{fig: census(b)} sorts them by attribute count and then by distinct count. In Figure \ref{fig: census(b)}, dashed lines delineate segments (e.g., 1 to 10) representing the number of attributes involved. We report results for Census as other datasets exhibit similar trends. Comparing these figures, we observe:

\textbf{O6: Incorporating information about the number of attributes also offers potential for multi-attribute distinct combination estimation.} As shown in \autoref{fig:errorperquery}, the overall shape of the curve in Figure \ref{fig: census(b)} closely resembles that in Figure \ref{fig: census(a)}. Taking WD as an example, when the number of attributes is $10$, its error curve shows a downward trend as the distinct count increases, suggesting growing underestimation. However, when the number of attributes exceeds $30$, the error curve of WD exhibits an upward trend, indicating decreasing error. This indicates that as the number of attributes increases, the average number of distinct values also tends to increase. In other words, the number of attributes influences the range of distinct count, thereby affecting estimation accuracy. 

\begin{figure*}[ht!]
\centering
\vspace{-0.05in}
\subfigure[\label{fig: Census-f-1}\scriptsize Census]{\includegraphics[width=0.245\textwidth]{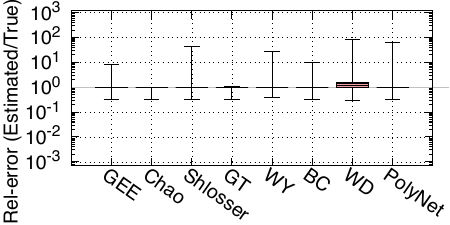}} 
\subfigure[\label{fig: Airline-f-1}\scriptsize Airline]{\includegraphics[width=0.245\textwidth]{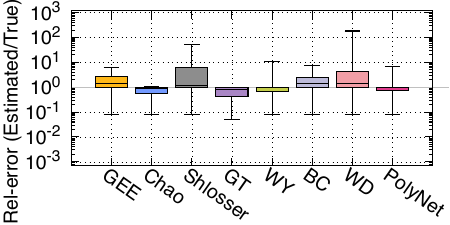}}
\subfigure[\label{fig: DMV-f-1}\scriptsize DMV]{\includegraphics[width=0.245\textwidth]{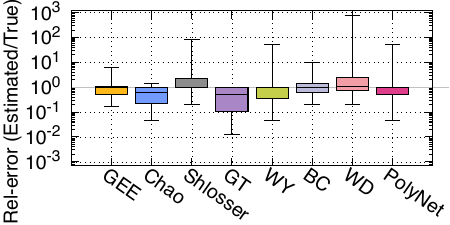}}
\subfigure[\label{fig: Campaign-f-1}\scriptsize Campaign]{\includegraphics[width=0.245\textwidth]{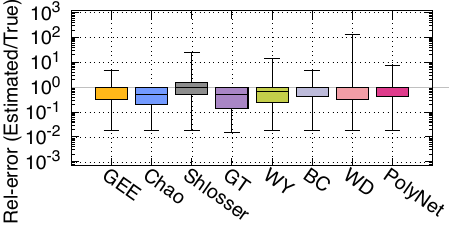}} 
\vspace{-0.15in}
\caption{\label{fig: single-f}Rel-error distribution on single-attribute filtered GROUP-BY queries.}
\vspace{-0.05in}
\end{figure*}

\begin{figure*}[ht!]
\centering
\vspace{-0.05in}
\subfigure[\label{fig: Census-f-n}\scriptsize Census]{\includegraphics[width=0.245\textwidth]{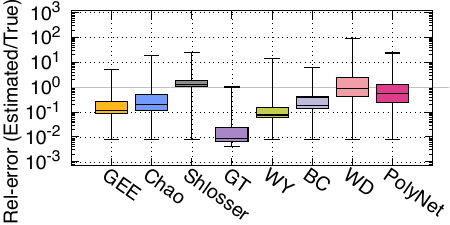}} 
\subfigure[\label{fig: Airline-f-n}\scriptsize Airline]{\includegraphics[width=0.245\textwidth]{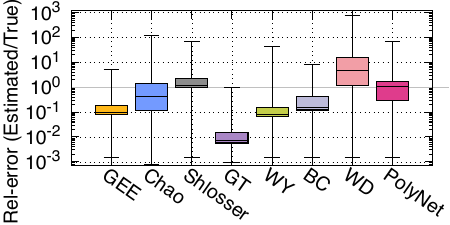}}
\subfigure[\label{fig: DMV-f-n}\scriptsize DMV]{\includegraphics[width=0.245\textwidth]{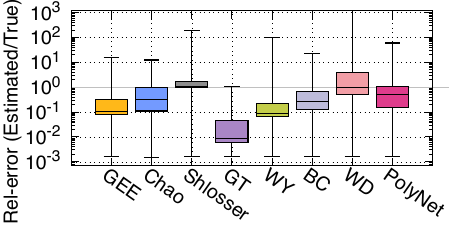}}
\subfigure[\label{fig: Campaign-f-n}\scriptsize Campaign]{\includegraphics[width=0.245\textwidth]{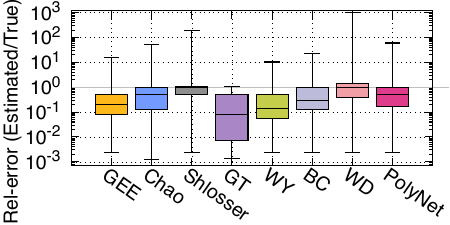}} 
\vspace{-0.15in}
\caption{\label{fig: multiple-f}Rel-error distribution on multi-attribute filtered GROUP-BY queries.}
\vspace{-0.05in}
\end{figure*}

\section{Can Existing Estimators Handle Filtered GROUP-BY Queries?}\label{sec:6}

In this section, we extend our evaluation to multi-attribute GROUP-BY cardinality estimation involving selection predicates. \textcolor{black}{In practical query optimization scenarios, cardinality estimation is subject to strict latency constraints and is typically performed using pre-computed statistics and samples (e.g., the \texttt{ANALYZE} mechanism in PostgreSQL). Accordingly, Sections \ref{sec:6.1} and \ref{sec:6.2} evaluate filtered multi-attribute GROUP-BY queries under the traditional pre-sampled regime, in which a fixed-rate sample is built in advance, and filter predicates are applied on the sample during estimation. In addition, Section \ref{sec:6.3} further explores the Independent Query Sampling (IQS) scenario \cite{iqs} , which represents the potential accuracy gains when the system can afford a more sophisticated, query-dependent sampling process.}

\subsection{How Do Filter Predicates Affect Queries? }\label{sec:6.1}
We present the boxplots depicting the rel-error distribution for the filtered workload at a sampling rate of \( r = 0.005 \) in Figures \ref{fig: single-f} and \ref{fig: multiple-f}, respectively. Recall Section \ref{sec:sampling-based estimators}, the SCBC-T method is a variant of SCBC, used to isolate the error introduced by sketch-based estimation of single-attribute information, representing the theoretical optimum of SCBC. However, SCBC does not support filter predicates, as the original sketch structure cannot estimate the single-attribute distinct values and $F_1$ values under filter predicates. Therefore, in this section, we report the performance of the remaining eight estimators under filtered GROUP-BY queries. The key observations are summarized as follows:

\textcolor{black}{\textbf{O7: Under pre-computed sampling, filter predicates amplify worst-case estimation errors by reducing the effective sample size, with a more severe impact in multi-attribute queries.}}
\textcolor{black}{Figures \ref{fig: single-f} and \ref{fig: multiple-f} evaluate the pre-sampled regime, where filter predicates are applied on top of a fixed-rate sample.} While the median error remains relatively stable, the maximum error increases dramatically due to the sharp reduction in the effective sample size caused by selective filters. For single-attribute queries, filters increase the maximum estimation error by an average of $14\times$, $24\times$, $65\times$, and $10\times$ across the four datasets. \textcolor{black}{For multi-attribute queries, the same effect is further amplified:} comparing Figures \ref{fig: multiple} and \ref{fig: multiple-f} shows average increases of $11\times$, $51\times$, $123\times$, and $93\times$, respectively.  \textcolor{black}{This amplification arises because the reduction in effective sample size affects both settings, but leads to more extreme sparsity in multi-attribute GROUP-BY due to a larger distinct combination space. }

\textbf{O8: Estimating distinct values for queries with filter predicates requires sampling rate information to distinguish between the actual sample size and the number of tuples within the sample that satisfy the predicates.} As shown in Figure \ref{fig: multiple-f}, the error of WD increases significantly compared to Figure \ref{fig: multiple}. This is because WD lacks information about the sampling rate. As Equation (\ref{EQ:WD}), its input includes only the sample profile and the total population size $N$. The sample size $n$ can be implicitly derived as $n = \sum_{i=1}^{m} i \cdot f_i$. However, when filter predicates are introduced, the sample profile only reflects the subset of sampled records that satisfy the predicate, and no longer corresponds to the actual sample size. Hence, when filter predicates are involved, estimators must be able to distinguish between the actual sample size and the number of tuples within the sample that satisfy the predicates.

\vspace{-0.08in}
\subsection{When Are Filtered GROUP-BY Queries More Difficult?}\label{sec:6.2}
\textcolor{black}{In this section, we conduct a query-level analysis to examine the sources of estimation error under the pre-computed sampling regime. In this setting, estimation errors are primarily driven by the effective sample size reduction caused by selective filters, as well as the correlation between the filter attribute and the GROUP-BY attributes.}
\begin{figure}[t!]
\subfigure[\label{fig: category}\scriptsize Percentage]{\includegraphics[width=0.24\linewidth]{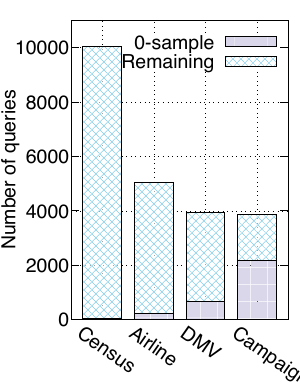}}
\subfigure[\label{fig: extreme-sel}\scriptsize Selectivity]{\includegraphics[width=0.24\linewidth]{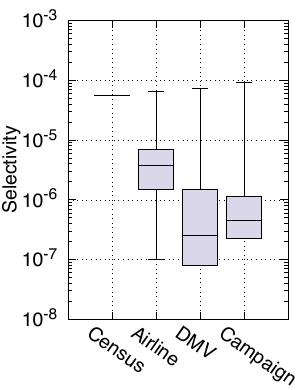}}
\subfigure[\label{fig: extreme-1}\scriptsize $\hat{D_\theta}=1$]{\includegraphics[width=0.24\linewidth]{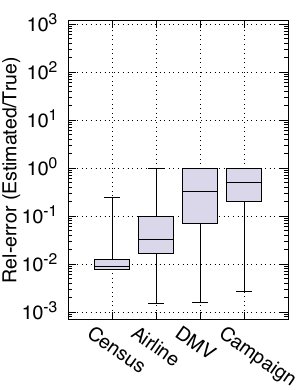}}
\subfigure[\label{fig: extreme-heuristic} \scriptsize Heuristic]{\includegraphics[width=0.24\linewidth]{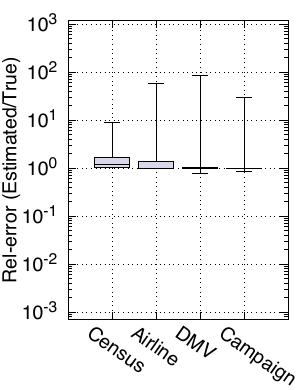}}
\vspace{-0.15in}
\caption{\label{fig:extreme} Analysis of queries of the 0-sample case.}
\vspace{-0.2in}
\end{figure}

\begin{figure*}[ht!]
\centering
\begin{minipage}[]{0.8\textwidth}
\includegraphics[width=\linewidth]{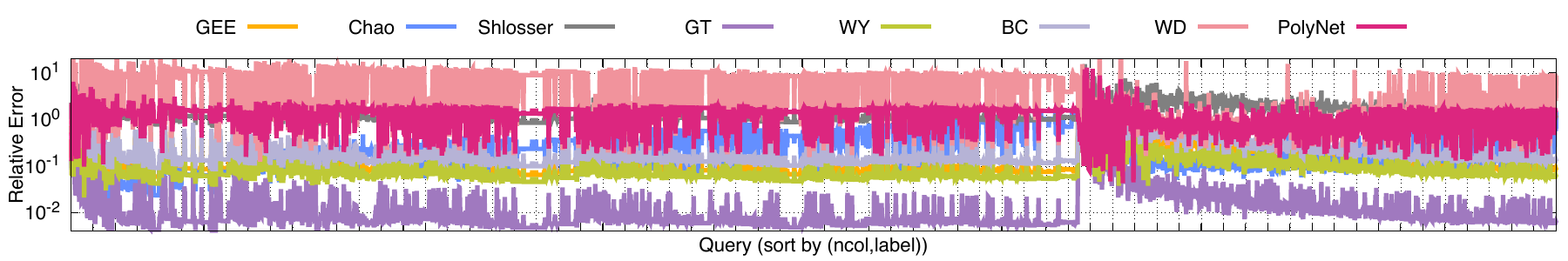}
\end{minipage}\\
\vspace{-0.08in}
\subfigure[\label{fig: census-ratio}\scriptsize Census]{\includegraphics[width=0.49\linewidth]{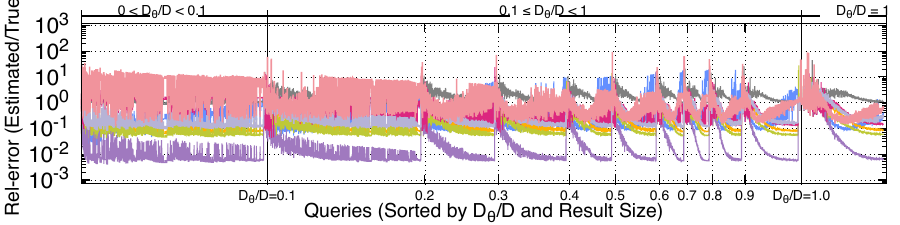}}
\vspace{-0.08in}
\subfigure[\label{fig: airline-ratio}\scriptsize Airline]{\includegraphics[width=0.49\linewidth]{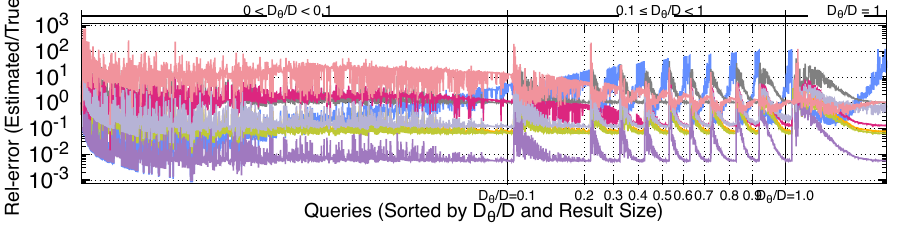}}
\vspace{-0.08in}
\subfigure[\label{fig:dmv-ratio}\scriptsize DMV]{\includegraphics[width=0.49\linewidth]{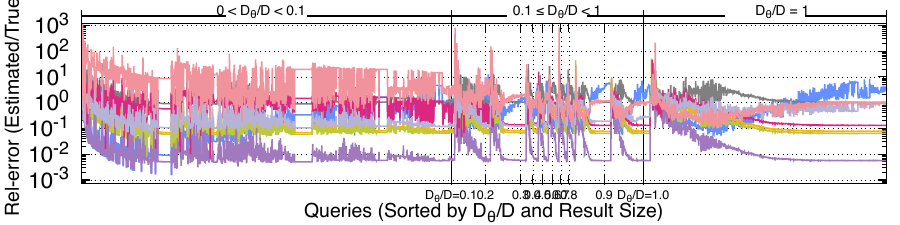}}
\vspace{-0.05in}
\subfigure[\label{fig:campaign-ratio}\scriptsize Campaign]{\includegraphics[width=0.49\linewidth]{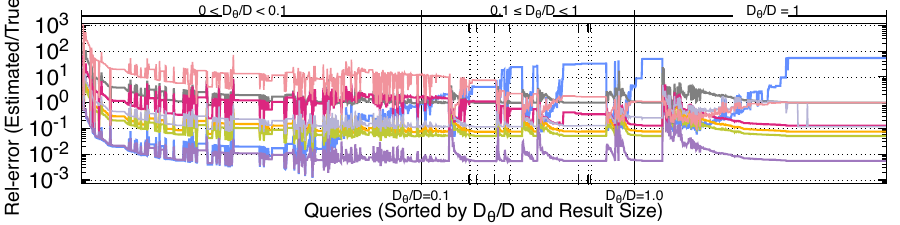}}
\caption{\label{fig:errorperquery-f} Rel-errors on varying filtered distinct count ratio $D_\theta/D$ and query result size $D_\theta$.}
\vspace{-0.1in}
\end{figure*}

We first isolate an extreme case where the selectivity is so high that the sample contains no qualifying tuples. Figure \ref{fig:extreme} analyzes such queries. For the remaining queries, we report the estimation error per query. 
Taking the following query as an example:
\texttt{SELECT} $A_1,A_2$ \texttt{FROM} $\mathcal{R}$ \texttt{WHERE} $\theta_\mathcal{R}$ \texttt{GROUP BY} $A_1,A_2$.
We define:
\begin{itemize} [topsep=0pt,leftmargin=12pt]
    \item Base Distinct Count ($D$): Result size of \texttt{SELECT} $A_1,A_2$ \texttt{FROM} $\mathcal{R}$ \texttt{GROUP BY} $A_1,A_2$.
    \item Filtered Distinct Count ($D_\theta$): Result size after applying filter predicate $\theta$.
\end{itemize}

We compute the filtered distinct count ratio $D_\theta / D$ for each query, rounding it to one decimal place. We then sort the queries based on $D_\theta / D$; for queries with the same ratio, we further sort them by their query result size $D_\theta$. The errors of each query are reported in \autoref{fig:errorperquery-f}. Next, we summarize the key observations as follows:

%
\textcolor{black}{\textbf{O9: Under pre-computed sampling, extremely selective filters may yield empty effective samples; incorporating selectivity information can mitigate this issue beyond using a default estimate.}}
\textcolor{black}{In the pre-sampled setting, empty-sample cases arise when highly selective filter predicates eliminate all tuples from the fixed-rate sample.} Since the number of distinct values cannot be zero, this issue is fundamentally caused by extreme selectivity rather than estimator failure. Figures \ref{fig: category} and \ref{fig: extreme-sel} show that all zero-sample queries across datasets have selectivities below $10^{-4}$, suggesting that selectivity can serve as a reliable signal to adapt estimation strategies.

To illustrate this idea, we introduce a simple heuristic that combines selectivity with unfiltered distinct statistics. Assuming uniform pre-filter group sizes and binomial tuple retention after filtering, the expected number of qualifying groups is estimated as:
\begin{equation} \small
\vspace{-0.05in}
\begin{aligned}
D^{Heuristic}_\theta=D \cdot \left[1 - (1-s)^{N/D}\right]
\label{EQ:heuristic}
\end{aligned}
\end{equation}
where $N$ is the total row count, $D$ is the number of distinct groups before filtering, and $s$ is the selectivity. To isolate the effect of selectivity modeling, we compute this heuristic using exact values of $N$, $D$, and $s$.

Figure \ref{fig: extreme-1} shows that returning a default value of $1$ consistently underestimates the true result in zero-sample cases. In contrast, Figure \ref{fig: extreme-heuristic} demonstrates that the proposed heuristic substantially improves accuracy under extreme selectivity, with over 75\% of queries achieving relative errors close to $1$. These results indicate that, under pre-computed sampling, combining selectivity estimation with unfiltered distinct statistics is a promising direction for mitigating extreme errors in filtered GROUP-BY queries.

\textbf{O10: \textcolor{black}{Under pre-computed sampling,} the main challenge of filtered GROUP-BY estimation stems from the reduction of effective sample size caused by selectivity and correlation.}
Compared with the non-filtered case, estimation errors under filtered queries are larger and less stable. This instability arises because, \textcolor{black}{in the pre-sampled regime,} filter predicates reduce the number of qualifying tuples in the fixed-rate sample, leading to query-dependent effective sample sizes and highly variable estimation behavior. As the ratio $D_\theta / D$ increases, error fluctuations gradually diminish and converge to trends similar to the non-filtered case (Figure \ref{fig:errorpercentage}). The most severe errors occur when $0 < D_\theta / D < 0.1$, where effective samples are particularly sparse. In this regime, effective sample size reduction is driven by two factors: (i) highly selective predicates that retain very few tuples in the pre-sampled data, and (ii) strong correlation between filter predicates and GROUP-BY attributes, which skews the distribution of GROUP-BY keys and further reduces their likelihood of appearing in the sample.

\textcolor{black}{Overall, mitigating errors in filtered GROUP-BY estimation under pre-computed sampling can be approached from two complementary directions: }(1) incorporating additional information on selectivity and correlation (e.g., integrating selectivity estimation into learning-based models), and (2) increasing effective sample sizes, for example through skew-aware or predicate-aware sampling strategies.

\begin{figure*}[ht!]
\centering
\subfigure[\label{fig: Census-iqs-1}\scriptsize  \textcolor{black}{Census}]{\includegraphics[width=0.245\textwidth]{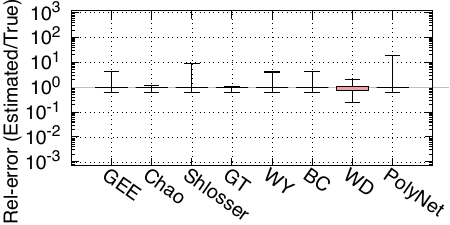}} 
\subfigure[\label{fig: Airline-iqs-1}\scriptsize  \textcolor{black}{Airline}]{\includegraphics[width=0.245\textwidth]{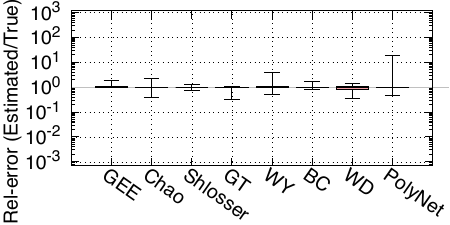}}
\subfigure[\label{fig: DMV-iqs-1}\scriptsize  \textcolor{black}{DMV}]{\includegraphics[width=0.245\textwidth]{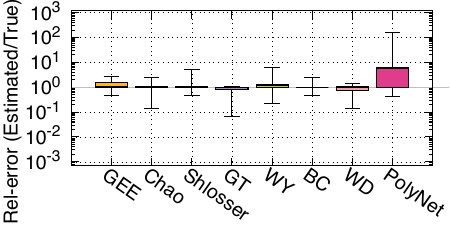}}
\subfigure[\label{fig: Campaign-iqs-1}\scriptsize  \textcolor{black}{Campaign}]{\includegraphics[width=0.245\textwidth]{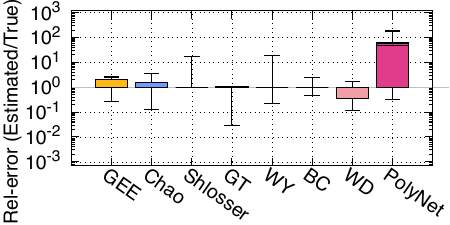}} 
\vspace{-0.12in}
\caption{\label{fig: single-iqs}\textcolor{black}{Rel-error distribution on single-attribute filtered GROUP-BY queries under IQS setting.}}
\vspace{-0.05in}
\end{figure*}

\begin{figure*}[ht!]
\centering
\vspace{-0.05in}
\subfigure[\label{fig: Census-iqs-n}\scriptsize  \textcolor{black}{Census}]{\includegraphics[width=0.245\textwidth]{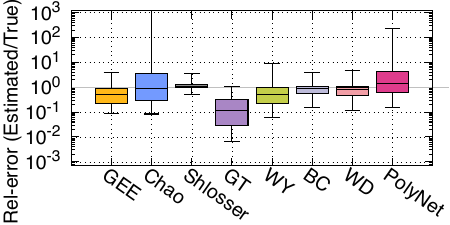}} 
\subfigure[\label{fig: Airline-iqs-n}\scriptsize  \textcolor{black}{Airline}]{\includegraphics[width=0.245\textwidth]{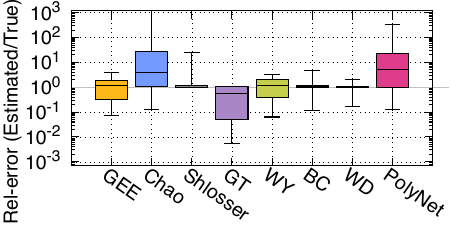}}
\subfigure[\label{fig: DMV-iqs-n}\scriptsize  \textcolor{black}{DMV}]{\includegraphics[width=0.245\textwidth]{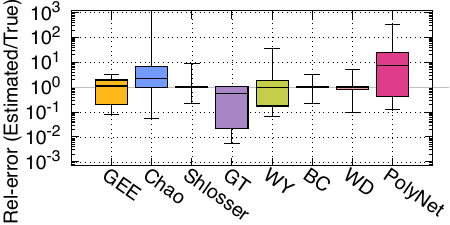}}
\subfigure[\label{fig: Campaign-iqs-n}\scriptsize  \textcolor{black}{Campaign}]{\includegraphics[width=0.245\textwidth]{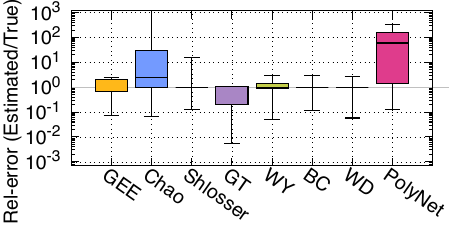}} 
\vspace{-0.12in}
\caption{\label{fig: multiple-iqs}\textcolor{black}{Rel-error distribution on multi-attribute filtered GROUP-BY queries under IQS setting.}}
\vspace{-0.1in}
\end{figure*}

\textbf{O11: The distribution of a single attribute in the GROUP-BY clause affects the overall distribution.} In Figures \ref{fig:dmv-ratio} and \ref{fig:campaign-ratio}, all methods exhibit flat line segments. This occurs because when a particular attribute has a significantly high distinct count, it tends to dominate the overall distribution. As a result, whenever a GROUP-BY query includes this attribute, the query's cardinality approximates the distinct count of that attribute. Additionally, the attribute's distribution also dominates the sample profile. Therefore, different queries involving this dominant attribute will produce similar sample profiles, leading to similar estimation errors. This insight suggests that incorporating information about individual attributes could improve the estimation of GROUP-BY queries.

\vspace{-0.05in}
\subsection{\textcolor{black}{Evaluation on Independent Query Sampling Scenario}} \label{sec:6.3}

\textcolor{black}{In this section, we report the relative errors for single-attribute and multi-attribute distinct estimation under the IQS setting in Figures \ref{fig: single-iqs} and \ref{fig: multiple-iqs}. Unlike pre-sampling, Independent Query Sampling (IQS) aims to construct mutually independent samples for different queries, thereby preventing error accumulation and instability caused by correlated samples across repeated queries \cite{iqs}. The key observation is as follows:}


\textcolor{black}{\textbf{O12: IQS enhances estimation stability by mitigating empty-sample issues, yet its accuracy remains constrained by the inherent population size under extreme selectivity, with varying robustness across learning-based models.}} \textcolor{black}{As shown in Figures \ref{fig: single-iqs} and \ref{fig: multiple-iqs}, IQS yields more stable error distributions. However, a fundamental limitation persists: when predicate selectivity is so high that the number of qualifying tuples is smaller than the target sample size, the reduced effective sample size inevitably lowers accuracy compared to non-filtered settings. Among the evaluated methods, PolyNet is the most sensitive to selectivity-induced changes in sample size. This is because PolyNet’s polynomial coefficients are learned for a specific sampling-rate range ($0.001$–$0.01$) \cite{li2024learning}, and fail to generalize when high selectivity pushes the effective rate below this threshold. In contrast, WD approximates a maximum likelihood estimator (MLE), which remains statistically valid across varying sampling rates, rather than learning an empirically optimal function for a fixed sampling rate. As a result, its estimation error is more stable. Finally, from an efficiency perspective, unlike pre-sampling, which incurs a one-time materialization cost, IQS performs sampling at query time. Prior work \cite{iqs} shows that, with appropriate indexing, the per-query overhead of IQS can be reduced to  \(O(\log N + n)\), yielding more stable estimation at the expense of higher execution overhead.}


\section{\textcolor{black}{Experiments on the TPC-H benchmark}} \label{sec: tpch}

\textcolor{black}{To complement our synthetic benchmarks, we further evaluate the estimators using the TPC-H benchmark. Our evaluation consists of two parts: Section \ref{sec: multitable} studies multi-table scenarios, while Section \ref{sec: plan} analyzes the practical impact of distinct value estimation on query plan selection.}

\vspace{-0.05in}
\subsection{\textcolor{black}{Multi-Table Distinct Value Estimation}}\label{sec: multitable}
\textcolor{black}{For multi-table queries, constructing representative samples over joins is a challenging problem and orthogonal to the focus of this paper. Therefore, we adopt a widely used \emph{correlated sampling} strategy to obtain join samples that approximately reflect the underlying join distribution. For queries containing nested subqueries, we further rewrite them into equivalent join-based formulations that preserve the grouping semantics relevant to distinct value estimation. We report the error distributions for multi-table queries with single-attribute and multi-attribute GROUP-BY clauses in Figures~\ref{fig: tpch-1} and~\ref{fig: tpch-n}, respectively. The key observation is as follows:}

\begin{figure}[t!]
\subfigure[\label{fig: tpch-1}\scriptsize  \textcolor{black}{Single Attr.}]{\includegraphics[width=0.48\linewidth]{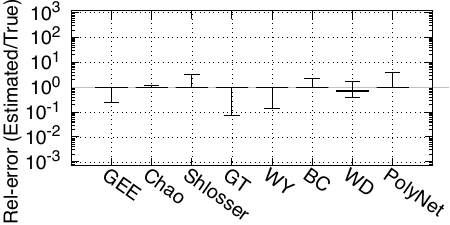}}
\subfigure[\label{fig: tpch-n}\scriptsize  \textcolor{black}{Multi. Attr.}]{\includegraphics[width=0.48\linewidth]{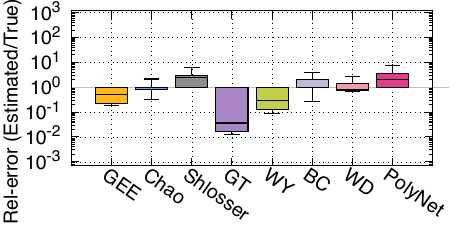}}
\vspace{-0.12in}
\caption{\label{fig:tpcha} \textcolor{black}{Evaluation on the TPC-H benchmark.}}
\vspace{-0.1in}
\end{figure}

\textbf{\textcolor{black}{O13: When join samples faithfully reflect the join output distribution, the conclusions drawn from single-table experiments continue to hold in the multi-table setting.}}
\textcolor{black}{In TPC-H, the schema and workload characteristics allow correlated sampling to produce join samples that reasonably reflect the join distribution. As shown in Figures~\ref{fig: tpch-1} and~\ref{fig: tpch-n}, all estimators achieve near-perfect accuracy for single-attribute queries, while multi-attribute GROUP-BY queries exhibit higher distinct counts and correspondingly larger errors. Nonetheless, the error trends remain consistent with those observed in single-table experiments. Moreover, as estimation is performed on join samples, we empirically observe that the results are largely insensitive to whether the grouping attributes originate from the same table or span multiple tables, provided that the join samples reflect the true join distribution.}

\vspace{-0.05in}
\subsection{\textcolor{black}{Impact of Group-By Cardinality on Query Planning}}\label{sec: plan}

\textcolor{black}{We inject GROUP-BY cardinality estimates into PostgreSQL to study their practical impact on query plans. Figure~\ref{fig: tpch-time} reports the resulting speedup or slowdown relative to the default plans on TPC-H queries. Except for methods with severe underestimation (GT and WY), all other estimators yield performance close to that with true cardinalities, reflecting the generally accurate GROUP-BY estimates in TPC-H (Section~\ref{sec: multitable}).}
\begin{figure}[t!]
\subfigure[\label{fig: tpch-time}\scriptsize \textcolor{black}{Relative time}]{\includegraphics[width=0.231\linewidth]{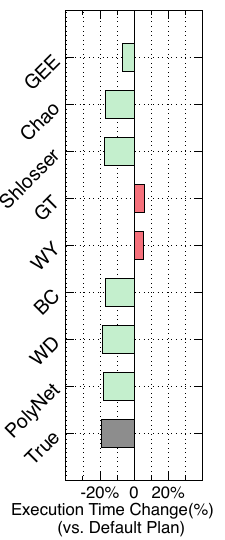}}
\subfigure[\label{fig: tpch-plan1}\scriptsize \textcolor{black}{Parallelism = 1}]{\includegraphics[width=0.23\linewidth]{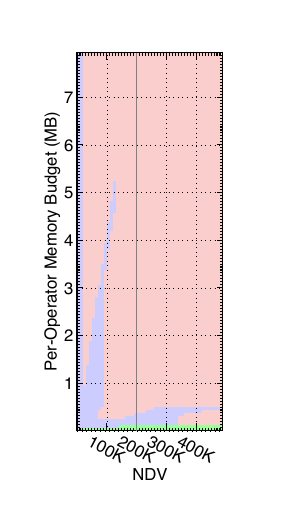}}
\subfigure[\label{fig: tpch-plan2}\scriptsize  \textcolor{black}{Parallelism = 2}]{\includegraphics[width=0.23\linewidth]{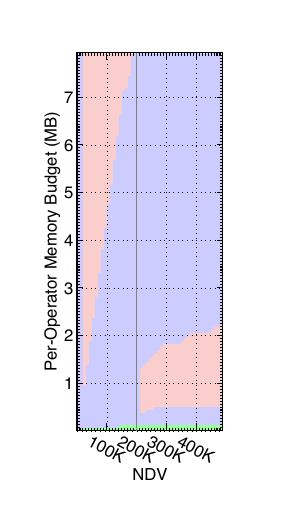}}
\subfigure[\label{fig: tpch-plan3}\scriptsize \textcolor{black}{Parallelism = 3}]{\includegraphics[width=0.278\linewidth]{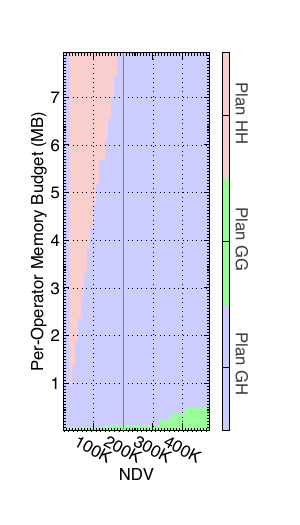}}
\vspace{-0.12in}
\caption{\label{fig:tpchb} \textcolor{black}{Impact of Group-By Cardinality Estimation on Query Planning and Execution}}
\vspace{-0.1in}
\end{figure}


\textcolor{black}{In TPC-H queries, GROUP-BY clauses primarily occur in the top-level query, where the PostgreSQL optimizer typically determines scan and join orders before adding the final aggregation node, meaning GROUP-BY cardinality estimates do not influence join planning in these cases.
PostgreSQL employs two main approaches: \texttt{HashAgg}, which builds an in-memory hash table scaling with the number of groups, and \texttt{GroupAgg}, which sorts tuples by grouping attributes and aggregates sequentially. For parallel execution, the optimizer considers three plans: \emph{Plan GH} (GroupAgg finalize + Partial HashAgg), \emph{Plan GG} (GroupAgg finalize + Partial GroupAgg), and \emph{Plan HH} (HashAgg finalize + Partial HashAgg). 
Using a single-table query on \texttt{lineitem} grouped by \texttt{l\_partkey} as a case study, we vary the NDV estimates (1 to 500K; True NDV = 200K), available memory (\texttt{work\_mem} = 64 to 8192 KB), and parallelism (1 to 3 workers) to map the optimizer's decision boundaries (Figures~\ref{fig: tpch-plan1}-\ref{fig: tpch-plan3}).}

\textcolor{black}{As shown in these figures, the optimizer uses the NDV estimate to project memory requirements: Under extremely limited memory, it tends to select \emph{Plan GG} due to low memory requirements despite higher sorting cost, while with sufficient memory, HashAgg-based plans are preferred. 
Notably, in high-NDV regions, increasing parallelism shifts the choice from hash-based finalization (\emph{Plan HH}) to sort-based finalization (\emph{Plan GH}); this occurs because the finalization phase must construct a global hash table that is more likely to exceed the \texttt{work\_mem} budget as the number of groups increases.}

\textcolor{black}{In summary, overestimating the GROUP-BY cardinality leads to unnecessary and expensive sorting, while underestimation can cause disk spills during hash table construction that degrade performance.
Near the decision boundary between \texttt{HashAgg} and \texttt{GroupAgg}, plan selection is highly sensitive to NDV estimation errors. Furthermore, in more complex workloads like TPC-DS, GROUP-BY frequently appears in subqueries (e.g.,  TPC-DS Q1). Inaccurate NDV estimates in these subqueries propagate upward and mislead the optimizer in selecting join plans at higher levels \cite{freitag2019every}. These findings underscore the critical role of robust NDV estimation for effective query optimization in practice.}

\vspace{-0.05in}
\section{discussion}\label{sec:7}

\underline{\textbf{Main Findings.}}
We conduct a comprehensive evaluation of the multi-attribute distinct value estimation problem.
Based on this evaluation, 
we summarize our key observations \textcolor{black}{and discuss the research questions (\textbf{Q1}–\textbf{Q3}) posed in the introduction.}
\begin{itemize} [topsep=0pt,leftmargin=0pt] 

\item \textcolor{black}{\textbf{On Q1}:}
Estimators relying solely on joint distribution information, including learning-based models, exhibit substantially higher errors in multi-attribute settings (\textbf{O1}) and may introduce systematic over- or underestimation due to heuristic distributional assumptions (\textbf{O3}).

\item  \textcolor{black}{\textbf{On Q2}:} 
The existing method that incorporates single-attribute information tightens bounds only in limited cases and often reverts to the original estimates (\textbf{O5}), indicating that single-attribute statistics are not yet fully exploited.

\item  \textcolor{black}{\textbf{On Q3}:
Existing methods incur larger errors under filtered GROUP-BY queries. This is primarily due to selectivity and correlation effects, which result in sparse qualifying samples (\textbf{O7}). While IQS alleviates this issue, it does not eliminate it entirely (\textbf{O9}, \textbf{O10}, \textbf{O12}). In addition, learning-based models must carefully account for correct sampling rate information and generalize across varying sampling rates (\textbf{O8}, \textbf{O12}). For joins, the primary challenge lies in constructing representative join samples, which is orthogonal to this work (\textbf{O13}).}

\end{itemize}

\noindent\underline{\textbf{Research Opportunities.}} 
Our findings highlight two key directions for future research: 
\begin{itemize} [topsep=0pt,leftmargin=0pt] 

\item \textbf{Improving Estimation Accuracy:} 
\textcolor{black}{Our experiments show that WD achieves stable accuracy by formulating distinct value estimation as a maximum likelihood estimation (MLE) problem,
offering high portability across workloads (\textbf{O2}, \textbf{O12}). For multi-attribute settings, single-attribute marginal distributions provide critical information. As shown by Theorem~1, knowing marginals strictly tightens the lower bound on estimation error by constraining the feasible space of joint distributions; while this does not resolve worst-case indistinguishability, it eliminates a large class of pathological joint distributions. Empirically, this effect is reflected by the clear advantage of SCBC-T on high-NDV datasets. However, SCBC-T applies marginal-derived bounds only as a loose, post hoc correction and often degenerates to the original estimates in most scenarios (\textbf{O5}). Thus, a promising direction is to combine these insights by formulating multi-attribute distinct estimation as a constrained MLE problem, where bounds derived from marginals (e.g., Theorem~1) and empirically relevant factors such as singleton counts, single-attribute statistics, and attribute dimensionality (\textbf{O4}, \textbf{O6}, \textbf{O11}) are integrated directly into the training of learning-based estimators, potentially mitigating systematic biases from heuristic assumptions.}


\item 
\textbf{Quantifying Estimation Uncertainty:} \textcolor{black}{Existing estimators typically provide only point estimates, lacking a measure of confidence. Our observations (\textbf{O4} and \textbf{O8}-\textbf{O12}) reveal that estimation errors are strongly correlated with sample-level statistics, such as the singleton ratio and effective sample size, providing a concrete rationale for uncertainty modeling. Future research can leverage techniques like conformal prediction to learn these correlations and produce rigorous uncertainty intervals. 
As analyzed in Section VII.B, such quantification empowers optimizers to make risk-aware decisions rather than relying on potentially inaccurate point estimates.}
\end{itemize}

\vspace{-0.05in}
\section{Conclusion}\label{sec:9}
This paper presents a comprehensive empirical study of multi-attribute GROUP-BY cardinality estimation. 
Through synthetic workloads and the multi-table TPC-H benchmark, we show that multi-attribute GROUP-BY cardinality estimation remains challenging in practice. Our findings highlight both the limitations of existing approaches and concrete directions toward more robust and practically effective GROUP-BY cardinality estimators.

\vspace{-0.05in}
\section*{Acknowledgment}
The work is supported by National Key R\&D Program of China (2024YFF0617702), NSFC (Nos. U22A2025, 62232007, U23A20309), 111 Project (No. B16009), and the Fundamental Research Funds for the Central Universities (No. N25FZD001).
\section*{
AI-Generated Content Acknowledgement
} \label{sec:9}

This work was conducted entirely by the authors. All ideas, analyses, proofs, figures, and data were manually developed. No AI tools were used to generate or modify any technical content.
\bibliographystyle{IEEEtran}
\bibliography{sample}

\end{document}